\documentclass[journal,draftcls,onecolumn,twoside]{IEEEtran}

\usepackage{amsmath,amssymb,amsthm}
\usepackage{enumerate}
\usepackage{stfloats}
\usepackage{comment}
\usepackage{bbm}
\usepackage{siunitx}
\usepackage{mathtools}
\usepackage{accents,latexsym,cancel}
\usepackage{cite}

\let\labelindent\relax
\usepackage{enumitem}
\usepackage[dvipsnames]{xcolor}
\definecolor{myPink}{RGB}{255,105,183}

\usepackage[T1]{fontenc}
\usepackage{graphics}
\usepackage{epsfig}
\usepackage[mathscr]{euscript}
\usepackage{algorithm}
\usepackage[noend]{algpseudocode}
\usepackage{bbm}
\makeatletter
\def\BState{\State\hskip-\ALG@thistlm}
\makeatother

\DeclareMathAlphabet{\mathpzc}{OT1}{pzc}{m}{it}

\usepackage[hidelinks]{hyperref}

\usepackage{tikz}
\usetikzlibrary{arrows,shapes,chains,matrix,positioning,scopes,patterns,calc,
decorations.markings,
decorations.pathmorphing,
}

\usepackage{pgfplots}
\pgfplotsset{compat=1.3}
\usepgflibrary{shapes}

\newtheorem{theorem}{Theorem}
\newtheorem{lemma}[theorem]{Lemma}
\newtheorem{proposition}[theorem]{Proposition}
\newtheorem{definition}[theorem]{Definition}

\newtheorem{remark}[theorem]{Remark}
\newtheorem{corollary}[theorem]{Corollary}
\newtheorem{assumption}[theorem]{Assumption}

\renewcommand{\epsilon}{\varepsilon}

\newcommand{\norm}[1]{\lVert #1\rVert
}

\newcommand{\abs}[1]{\left\lvert#1\right\rvert}

\newcommand{\RNum}[1]{\uppercase\expandafter{\romannumeral #1\relax}}

\newcommand{\rv}{\ensuremath{\mathbf{r}}}

\newcommand{\uv}{\ensuremath{\mathbf{u}}}
\newcommand{\vv}{\ensuremath{\mathbf{v}}}
\newcommand{\wv}{\ensuremath{\mathbf{w}}}
\newcommand{\xv}{\ensuremath{\mathbf{x}}}
\newcommand{\yv}{\ensuremath{\mathbf{y}}}
\newcommand{\zv}{\ensuremath{\mathbf{z}}}

\newcommand{\Av}{\ensuremath{\mathbf{A}}}
\newcommand{\Bv}{\ensuremath{\mathbf{B}}}
\newcommand{\Cv}{\ensuremath{\mathbf{C}}}
\newcommand{\Ev}{\ensuremath{\mathbf{E}}}
\newcommand{\Fv}{\ensuremath{\mathbf{F}}}
\newcommand{\Gv}{\ensuremath{\mathbf{G}}}
\newcommand{\Hv}{\ensuremath{\mathbf{H}}}
\newcommand{\Iv}{\ensuremath{\mathbf{I}}}
\newcommand{\Kv}{\ensuremath{\mathbf{K}}}
\newcommand{\Lv}{\ensuremath{\mathbf{L}}}
\newcommand{\Mv}{\ensuremath{\mathbf{M}}}
\newcommand{\Nv}{\ensuremath{\mathbf{N}}}
\newcommand{\Ov}{\ensuremath{\mathbf{O}}}
\newcommand{\Pv}{\ensuremath{\mathbf{P}}}
\newcommand{\Qv}{\ensuremath{\mathbf{Q}}}
\newcommand{\Rv}{\ensuremath{\mathbf{R}}}
\newcommand{\Sv}{\ensuremath{\mathbf{S}}}
\newcommand{\Tv}{\ensuremath{\mathbf{T}}}
\newcommand{\Uv}{\ensuremath{\mathbf{U}}}
\newcommand{\Vv}{\ensuremath{\mathbf{V}}}
\newcommand{\Wv}{\ensuremath{\mathbf{W}}}
\newcommand{\Xv}{\ensuremath{\mathbf{X}}}
\newcommand{\Yv}{\ensuremath{\mathbf{Y}}}
\newcommand{\Zv}{\ensuremath{\mathbf{Z}}}

\newcommand{\du}{\ensuremath{\underline{d}}}

\newcommand{\iu}{\ensuremath{\underline{i}}}
\newcommand{\ju}{\ensuremath{\underline{j}}}

\newcommand{\Cb}{\ensuremath{\mathbb{C}}}

\newcommand{\Eb}{\ensuremath{\mathbb{E}}}
\newcommand{\Fb}{\ensuremath{\mathbb{F}}}

\newcommand{\Rb}{\ensuremath{\mathbb{R}}}

\newcommand{\cC}{\ensuremath{\mathcal{C}}}

\newcommand{\cG}{\ensuremath{\mathcal{G}}}

\newcommand{\cN}{\ensuremath{\mathcal{N}}}

\newcommand{\cS}{\ensuremath{\mathcal{S}}}

\newcommand{\pzf}{\ensuremath{\mathpzc{f}}}
\newcommand{\pzg}{\ensuremath{\mathpzc{g}}}

\newcommand{\pzi}{\ensuremath{\mathpzc{i}}}

\newcommand{\zerov}{\ensuremath{\mathbf{0}}}

\newcommand{\Sigmav}{\ensuremath{\boldsymbol{\Sigma}}}

\newcommand{\Phiv}{\ensuremath{\boldsymbol{\Phi}}}

\newcommand{\xiv}{\ensuremath{\boldsymbol{\xi}}}
\newcommand{\zetav}{\ensuremath{\boldsymbol{\zeta}}}

\newcommand{\IDm}{\ensuremath{\mathbf{I}}}

\newcommand{\transpose}{\ensuremath{\mathsf{T}}}
\newcommand{\hermitian}{\ensuremath{\mathsf{H}}}

\DeclareMathAlphabet{\mcl}{OMS}{cmsy}{m}{n}

\newlength\tikzwidth
\newlength\tikzheight

\newcommand{\defeq}{\triangleq}

\definecolor{mycolor1}{rgb}{0.63529,0.07843,0.18431}
\definecolor{mycolor2}{rgb}{0.00000,0.44706,0.74118}
\definecolor{mycolor3}{rgb}{0.00000,0.49804,0.00000}
\definecolor{mycolor4}{rgb}{0.87059,0.49020,0.00000}
\definecolor{mycolor5}{rgb}{0.00000,0.44700,0.74100}
\definecolor{mycolor6}{rgb}{0.74902,0.00000,0.74902}

\begin{document}

\title{Complex Approximate Message Passing\\with Non-separable Denoising}
\author{Vishnu Teja Kunde, Alessandro Mirri,~\IEEEmembership{Graduate Student Member,~IEEE}, \\
Jean-Francois Chamberland,~\IEEEmembership{Senior Member,~IEEE}, and Enrico Paolini,~\IEEEmembership{Senior Member,~IEEE}
\thanks{Vishnu Teja Kunde and Jean-Francois Chamberland are with the Department
of Electrical and Computer Engineering, Texas A\&M University, College Station,
TX, 77843 USA (emails: \{vishnukunde,~chmbrlnd\}@tamu.edu). \emph{Corresponding author: Vishnu Teja Kunde.}}
\thanks{Alessandro Mirri and Enrico Paolini are with the CNIT/WiLab, DEI, University of Bologna, Bologna, Italy (emails: \{alessandro.mirri7,~e.paolini\}@unibo.it).}}

\maketitle

\begin{abstract}
Approximate Message Passing (AMP) is a general framework for iterative algorithms, originally developed for compressed sensing and later extended to a wide range of high-dimensional inference problems.
Although recent work has advanced matrix AMP, complex AMP, and AMP for non-separable functions independently, a unified state evolution theory for complex AMP with non-separable denoisers has been lacking.
This article fills that gap by establishing state evolution in the setting of complex, non-separable denoising functions.
The proposed approach constructs an augmented real-valued system that lifts the problem to a higher-dimensional space, then recovers the complex domain through a many-to-one canonical transformation.
Under this construction, the Onsager correction naturally involves Wirtinger derivatives, and the resulting state evolution reduces to scalar complex recursions despite the non-separable structure of the denoisers.
The framework extends to the matrix-valued setting, accommodating multiple feature vectors simultaneously.
This generalization enables AMP to exploit joint structural constraints, such as simultaneous group and element sparsity, in complex-valued recovery problems.
The complex sparse group least absolute shrinkage and selection operator (LASSO) serves as a key instantiation, motivated by preamble detection in Orthogonal Time-Frequency Space (OTFS)-based unsourced random access.
Numerical experiments confirm that state evolution accurately predicts performance and show that complex non-separable denoising can produce significant gains over separable and real-valued alternatives.
\end{abstract}

\begin{keywords}
Approximate Message Passing (AMP), Complex AMP, State Evolution, Sparse Group LASSO, Non-Separable Denoisers, Wirtinger Derivatives.
\end{keywords}

\section{Introduction}
\label{sec:intro}

Approximate Message Passing (AMP) has emerged as a powerful iterative framework for solving high-dimensional inference problems with random measurement matrices~\cite{donoho2009message,bayati2011dynamics,bayati2015universality}.
The framework offers computationally efficient algorithms and a precise asymptotic characterization of performance through \emph{state evolution} (SE).
AMP has been successfully applied in compressed sensing, statistical signal processing~\cite{parker2014bilinearI,parker2014bilinearII,metzler2016denoising,rangan2019vector}, high-dimensional inference \cite{tan2023mixed,ivkov2025fast}, and
communication~\cite{barbier2017approximate,rush2017capacity,amalladinne2021unsourced}.
The SE formalism enables sharp predictions of algorithmic performance, which can in turn be used for parameter selection.
While the original proofs of SE relied on Gaussian measurement matrices and separable denoising functions, subsequent work has progressively extended the theory to more general settings.
These include non-separable functions~\cite{berthier2020state}, matrix-valued variables~\cite{donoho2013information,feng2022unifying}, and structured sensing operators~\cite{gerbelot2023amp}, substantially enlarging the scope of AMP algorithms.

Complex-valued signal processing plays a central role in communications and related fields.
Many physical systems, particularly in wireless communications, are naturally modeled in the complex domain.
For example, orthogonal frequency-division multiplexing (OFDM) and orthogonal time-frequency space (OTFS) modulation rely on representing information-bearing signals as complex waveforms.
Similar complex-valued models appear in radar, sonar, and array processing, where phase coherence across sensors carries critical information.

Because of this intrinsic structure, algorithms designed for real-valued signals often fail to capture the statistical dependencies, rotational invariances, or phase relationships that are central to complex-valued data.
This has motivated the development of complex-signal processing frameworks that treat real and imaginary components in a unified manner rather than as two independent channels~\cite{Maleki2013CAMP,chen2018sparse,ma2019optimization,burak2025jointmsgdetection,sterk2025iterative}.
These developments lie at the intersection of high-dimensional inference and complex-valued communications theory, highlighting the potential benefits of complex AMP and SE applied to practical systems.

This article extends the complex AMP framework to non-separable denoising functions.
A key motivation is the \emph{sparse group LASSO} (SGL) problem~\cite{simon2013sparsegrouplasso,yuan2006grouplasso}, which combines group-level sparsity (via a group penalty) and element-level sparsity (via an $\ell_1$ penalty within each group).
The resulting proximal operator is inherently non-separable, as the group threshold depends on the joint norm of multiple components.
This structure arises naturally in preamble detection under OTFS modulation~\cite{mirri2025approximate}, where active users form sparse groups with sparse channel taps.

A naive approach would reduce the complex system to a real one via the isomorphism $\Cb^n \cong \Rb^{2n}$.
However, both natural embeddings, stacking rows or concatenating columns, either destroy the independent and identically distributed (i.i.d.) structure of the sensing matrix required for AMP, or decouple the real and imaginary components, precluding joint complex processing.
The proposed approach instead constructs an augmented real matrix-valued system~\cite{javanmard2013state,gerbelot2023amp} and recovers the complex formulation through a many-to-one canonical transformation.

This line of development builds on several key contributions.
Maleki et al.~\cite{Maleki2013CAMP} introduced AMP for scalar (separable) complex denoisers; \c{C}akmak, Gkiouzepi, Opper, and Caire~\cite{burak2025jointmsgdetection} recently extended complex AMP algorithms to row-wise \emph{separable} matrix denoisers by combining the \emph{Householder Dice} representation~\cite{lu2021householder} with a complex variant of Stein's lemma.
The present work builds on the graph-based AMP framework of Gerbelot and Berthier~\cite{gerbelot2023amp}, extending it to the complex domain with fully \emph{non-separable} denoisers.
The canonical transformation extends naturally to the matrix setting, where $p$ feature vectors are estimated jointly (Section~\ref{sec:matrix-extension}).

\noindent
\textbf{Notations:}
Let $\Rb^n$, $\Cb^n$ be $n$-dimensional vector spaces over the real and complex numbers.
The operators $(\cdot)^\transpose$, $(\cdot)^\hermitian$ denote transpose and Hermitian, respectively.
A matrix $\Sv$ is \emph{Hermitian positive definite} if $\Sv = \Sv^\hermitian$ and $\xv^\hermitian \Sv \xv > 0$ for all nonzero $\xv$ and a matrix $\Cv$ is \emph{complex symmetric} if $\Cv = \Cv^\transpose$.
We let $\pzf$, $\pzg$ denote complex vector-valued functions, and we write $f$, $g$ for real matrix-valued functions.
The $\iota$th column and the $\iota$th row of matrix $\Av$ are denoted by $\Av_{:, \iota}$ and $\Av_{\iota, :}$, respectively.
The Jacobian matrix $\left(\frac{\partial\uv}{\partial\vv} \right)$ of vector-valued variable $\uv$ with respect to $\vv$ has entries $\left(\frac{\partial\uv}{\partial\vv} \right)_{i,j} = \frac{\partial \uv_i}{\partial \vv_j}$.
For $\vv \in \Cb^n$, the empirical mean is $\langle \vv \rangle = \frac{1}{n} \sum_{\iota = 1}^n \vv_\iota$.
We use $\pzi = \sqrt{-1}$ and let $\rho = [ 1 , \pzi ]^\transpose$.
For a function $\pzg:\Cb^n\to \Cb^n$, we adopt the notation $(\pzg)'$, given by
\begin{equation*}
(\pzg)'_\iota(\vv) = \frac{\partial \pzg_\iota(\vv)}{\partial \vv_\iota} = \frac{1}{2} \left( \frac{\partial}{\partial \Re \vv_\iota} - \pzi \frac{\partial}{\partial \Im \vv_\iota}  \right) (\pzg_\iota (\vv))
\end{equation*}
for the Wirtinger derivatives.
For $\vv_1, \vv_2 \in \Cb^n$, we have $\langle \vv_1, \vv_2 \rangle = \vv_2^\hermitian \vv_1 = \sum_\iota v_{1,\iota} \overline{v_{2,\iota}} \in \Cb$, where $\overline{(\cdot)}$ denotes complex conjugation. We write $X_n  \stackrel{\mathrm{P}}{\simeq} Y_n$ to mean that $X_n - Y_n \stackrel{\mathrm{P}}{\to} 0$ (in probability) as $n\to \infty$.

\section{Motivation and Formulation}
\label{sec:motivation}

Consider the canonical linear inverse problem
\begin{equation} \label{eq:InverseProblem}
\yv = \Av \xv_0 + \wv
\end{equation}
where $\xv_0 \in \Fb^{N}$ is the signal vector, $\Av \in \Fb^{n\times N}$ is the measurement matrix, and $\wv \in \Fb^{n}$ is additive Gaussian noise with i.i.d. entries.
The underlying field $\Fb$ is either $\Rb$ or $\Cb$.
Let $\delta = n/N$ denote the undersampling ratio.
We are especially interested in the setting where $n$, $N$ are large and the entries of $\Av$ are i.i.d. Gaussian.
The task is to estimate $\xv_0$ from (noisy) observation vector $\yv$.
As part of this inference task, we know that $\xv_0$ is a structured vector and we wish to derive an iterative algorithm that leverages this information.

The quintessential example of such a structure is the well-studied sparse recovery problem, which can be solved using a proximal formulation and the least-absolute shrinkage and selection operator (LASSO)~\cite{donoho2009message}.
A natural extension replaces element-wise sparsity with group sparsity.
One can approach this problem using the group LASSO (GL) formulation,
\begin{equation} \label{eq:complex-group-lasso}
\textstyle
\min_{\xv \in \Fb^N} \lVert \yv-\Av\xv\rVert_2^2 +  \sum_{g\in \cG} \lambda_g \lVert \xv_g \rVert_2
\end{equation}
where $\cG$ is a partition of the set of indices~\cite{yuan2006grouplasso}.
In this setting, only indices within a few groups are non-zero.

Conceptually, the sparse group LASSO (SGL) problem is similar to the GL in that it combines a group structure wherein entries within only a few groups are non-zero and there is additional sparsity in every active group~\cite{simon2013sparsegrouplasso}.
This double sparsity is captured through the Lagrangian formulation given by
\begin{align} \label{eq:sparse-group-lasso}
\textstyle
\min_{\xv \in \Fb^N} \lVert \yv-\Av\xv\rVert_2^2  + \sum_{g\in \cG} \lambda_g \lVert \xv_g \rVert_2 + \lambda_{\rm e} \lVert \xv \rVert_1.
\end{align}
The Lagrangian parameters $\{ \lambda_g \}$, which penalize the group norms to encourage group-level sparsity, can be group-specific or uniform across groups, depending on the underlying structure of the problem.
Crucially, the resulting proximal operator is non-separable.
This type of denoiser precludes the application of early AMP results, but non-separable functions are addressed in more recent contributions for real-valued problems~\cite{berthier2020state,chen2021asymptotic,gerbelot2023amp}.

Complex versions of the SGL problem arise in communication settings such as preamble detection in OTFS, where the set of active preambles is sparse and the channel of every device admits a sparse representation in the delay-Doppler domain~\cite{mirri2025approximate}.
Such practical linear inverse problems over the complex field motivate the study of the complex AMP framework for non-separable, complex functions.

\begin{remark} \label{remark:ComplexAMP}
Often a problem over $\Cb^n$ can be reformulated equivalently over $\Rb^{2n}$ by representing complex numbers as two-dimensional real vectors.
However, naively trying to use the standard isomorphism between $\Cb^n$ and $\Rb^{2n}$ fails for the complex AMP formulation of \eqref{eq:InverseProblem}.
The difficulty stems, partly, from the typical AMP assumptions.
Stacking the real and imaginary parts of $\xv_0$ into a single vector $[\Re \xv_0 ; \Im \xv_0] \in \Rb^{2N}$ produces an augmented measurement matrix whose rows are pairwise dependent, violating the i.i.d. requirement.
Alternatively, arranging the components as $[\Re \xv_0 , \Im \xv_0] \in \Rb^{N \times 2}$ and retaining a single real sensing matrix $\Wv \in \Rb^{n \times N}$ preserves the i.i.d. property of the entries but decouples the two columns: the observation $\Wv [\Re \xv_0, \Im \xv_0]$ cannot reproduce the cross-coupling between real and imaginary parts inherent in complex multiplication $\Av \xv_0$.
Our approach adopts the horizontal arrangement but pairs it with a richer sensing matrix $\Wv \in \Rb^{2n \times N}$, yielding an augmented observation space of $4n$ real measurements rather than $2n$.
The additional measurements restore the coupling lost by the naive formulation, but carry more information than the original complex system.
A state collapse then constrains the augmented system to match the information content of the base formulation over $\Cb$ (see~\eqref{eq:CanonicalMap1} and the discussion that follows).
\end{remark}

The nuances outlined in Remark~\ref{remark:ComplexAMP} highlight why treatments of complex AMP cannot be obtained as straightforward extensions of results for the real case.
Maleki et al.~\cite{Maleki2013CAMP} derive the Onsager correction for the complex AMP iterate via a Taylor-series expansion.
More recently, the extension of AMP to complex-valued settings with row-wise separable denoising functions was established through an argument paralleling the \emph{Householder Dice} characterization, but relying on a complex version of Stein's lemma~\cite{burak2025jointmsgdetection}.

Our strategy for studying AMP over $\Cb$ differs from previous approaches.
Rather than deriving an alternate proof from first principles, we construct an augmented system with key structural properties.
Specifically, we examine a higher-dimensional formulation over $\Rb$, which is inherently richer than the base problem over $\Cb$ and thus not equivalent to it.
Existing AMP results~\cite{bayati2011dynamics,javanmard2013state,berthier2020state,gerbelot2023amp} readily apply to the augmented system.
By collapsing this augmented system onto the complex domain through a linear transformation and restricting the class of admissible denoisers, we show that the statistical properties of the base formulation over $\Cb$ can be inferred from the richer system.
Since \emph{state evolution} (SE) depends on these statistical properties, this strategy (under suitable assumptions) yields an asymptotic characterization of complex AMP with non-separable denoisers.
The details of this approach will be developed in the following sections.

\vspace{0.5em} \noindent
\textbf{Base System over $\Cb$:}
Mathematically, we wish to establish state evolution for the composite AMP algorithm over $\Cb$:
\begin{align}
\zv^t &= \yv-\Av \xv^t + \frac{\zv^{t-1}}{\delta} \big\langle (\eta^{t-1})'(\rv^{t-1}) \big\rangle \label{eq:ResidualError} \\
\rv^t &= \xv^t + \Av^{\hermitian} \zv^t \\
\xv^{t+1} &= \eta^t(\rv^t)
\label{eq:StateEstimate}
\end{align}
where $\eta^t: \Cb^N \to \Cb^N$ is a sequence of (potentially) non-separable complex-valued vector maps.
Therein, $\zv^t \in \Cb^{n}$ is the \emph{residual error} enhanced by an Onsager term;
$\rv^t \in \Cb^{N}$ is the \emph{effective observation}; and $\xv^{t+1}$ is the \emph{state estimate} after denoising.
Matrix $\Av$ has $\cC\cN(0, 1/n)$ i.i.d. entries.

\vspace{0.5em} \noindent
\textbf{Augmented System over $\Rb$:}
Consider the matrix inverse problem over $\Rb$ defined as
\begin{equation} \label{eq:MatrixInverseProblem}
\Ov = \Wv \Sv_0 + \Nv
\end{equation}
where $\Wv \in \Rb^{2n \times N}$ is the sensing matrix, $\Sv_0 \in \Rb^{N \times 2}$ is the state matrix, and $\Nv$ denotes additive noise.
The observation matrix is then $\Ov \in \Rb^{2n \times 2}$.
Note that $\Ov$ has twice as many real dimensions as $\yv$ in \eqref{eq:InverseProblem}.
Prior work on AMP for matrix inverse problems over $\Rb$~\cite{javanmard2013state,feng2022unifying,gerbelot2023amp} provides the conceptual foundation for our analysis.

In constructing the augmented system, the relation between \eqref{eq:InverseProblem} and \eqref{eq:MatrixInverseProblem} is given by $\Sv_0 = [ \Re \xv_0 , \Im \xv_0 ]$ and
\begin{xalignat}{2} \label{eq:CanonicalMap1}
\Wv &= \begin{bmatrix} \Wv_1 \\ \vdots \\ \Wv_n \end{bmatrix} &
\Wv_{\iota}
= \begin{bmatrix} \Re \Av_{\iota,:} \\ \Im \Av_{\iota, :} \end{bmatrix} .
\end{xalignat}
The entries of $\Wv$ are i.i.d. Gaussian random variables, and the state matrix $\Sv_0$ is in one-to-one correspondence with $\xv_0$ in \eqref{eq:InverseProblem}.
The key difference lies in the observation space: above, we have $\Ov \in \Rb^{2n \times 2}$, which amounts to $4n$ real measurements, compared to the $2n$ real measurements implicit in $\yv$.
This distinction explains why we describe the augmented system as information rich and, at the same time, motivates the need to collapse it into a lower-dimensional formulation.

\vspace{0.5em} \noindent
\textbf{State Collapse:}
Let linear transformation $\Rv: \Cb^{2n} \to \Cb^{n}$ be defined through the Kronecker product $\Rv = \Iv_n \otimes \rho^\transpose$ where $\rho^\transpose = [ 1 , \pzi ]$.
Then, we can immediately write
\begin{xalignat}{3} \label{eq:StateCollapse1}
\Rv \Wv &= \Av & \Sv_0 \rho &= \xv_0 & \Rv \Nv \rho &\stackrel{d}{=} \wv
\end{xalignat}
where $\stackrel{d}{=}$ denotes equality in distribution.
That is, conditioned on $\Av$ and $\xv_0$, we have
\begin{equation*}
\Rv \Ov \rho = (\Rv \Wv) (\Sv_0 \rho) + \Rv \Nv \rho \stackrel{d}{=} \Av \xv_0 + \wv = \yv .
\end{equation*}
The connection between the two systems is manifest.
It remains to show that the statistical properties of the complex AMP iterative algorithm \eqref{eq:ResidualError}--\eqref{eq:StateEstimate} can likewise be derived from the properties of the matrix AMP iterative algorithm.
Establishing this result, under suitable restrictions, is a key contribution of this article.
The precise formulation of these restrictions, and the resulting state evolution theorem, are the subject of the next section.

\section{State Evolution for Complex AMP}
\label{sec:main-result}

Having established the relationship between the base and augmented systems, we now formalize the reduction.
Following standard practice in the AMP literature~\cite{berthier2020state}, we work with the centralized formulation and establish state evolution for the following complex AMP iteration:
\begin{align}
\uv^{k+1} &= \Av^{\hermitian} \pzg^k \left( \vv^k \right) -  c^k \pzf^k \left( \uv^k \right)
\label{eq:complex-amp-1} \\
\vv^k &= \Av \pzf^k \left( \uv^k \right) - b^k \pzg^{k-1} \left( \vv^{k-1} \right)
\label{eq:complex-amp-2}
\end{align}
where $\pzf^k:\Cb^N \to \Cb^N$ and $\pzg^k:\Cb^n \to \Cb^n$ are sequences of complex vector functions, $c^k = \langle (\pzg^{k})'(\vv^{k}) \rangle$, and $b^k = ({1}/{\delta}) \langle (\pzf^{k})'(\uv^{k})\rangle$.
In the real-valued setting, the following matrix equations have been studied~\cite{gerbelot2023amp},
\begin{xalignat}{2}
\Hv^{k+1} &= \Wv^{\transpose} \Qv^k - \Mv^k (\Cv^k)^{\transpose}
& \Mv^{k+1} &= f^{k+1} (\Hv^{k+1}) \label{eq:real-mat-1} \\
\Ev^{k} &= \Wv \Mv^{k} - \Qv^{k-1} (\Bv^{k})^{\transpose} &
\Qv^k &= g^k(\Ev^{k}) \label{eq:real-mat-2}
\end{xalignat}
where $\Wv \in \Rb^{m \times M}$ is a real-valued matrix, $f^{k+1}$ and $g^k$ are real-valued matrix functions of suitable dimensions, and
\begin{xalignat*}{2}
\Cv^k &= \frac{1}{m} \sum_{j=1}^{m} \frac{\partial (g^k_{j,:}(\Ev^k))^\transpose}{\partial \Ev_{j,:}} &
\Bv^{k} &= \frac{1}{m} \sum_{j=1}^{M} \frac{\partial (f^k_{j,:}(\Hv^k))^\transpose}{\partial \Hv_{j,:}}.
\end{xalignat*}
We next establish the reduction of \eqref{eq:real-mat-1}--\eqref{eq:real-mat-2} to \eqref{eq:complex-amp-1}--\eqref{eq:complex-amp-2} via the transformation given below.

\subsection{Canonical Transformation}
\label{sec:canonical-transform}

\begin{definition}[Canonical Transformation]
\label{def:canonical-transformation}

Recall the relations in \eqref{eq:CanonicalMap1} between $\Wv$ and $\Av$ and between $\Sv_0$ and $\xv_0$.
Set and partition the iterates of the real matrix AMP system \eqref{eq:real-mat-1}--\eqref{eq:real-mat-2} into blocks $\Hv^k \in (\Rb^{1\times 2})^N$ and $\Ev^k \in (\Rb^{2\times 2})^n$, akin to \eqref{eq:CanonicalMap1}, so that
\begin{xalignat*}{2}
\Hv^k &= \begin{bmatrix} \Hv^k_1 \\ \vdots \\ \Hv^k_N \end{bmatrix} & \Hv^k_\iota &\in \Rb^{1 \times 2}, \\
\Ev^k &= \begin{bmatrix} \Ev^k_1 \\ \vdots \\ \Ev^k_n \end{bmatrix} & \Ev^k_\iota &= \begin{bmatrix} \Ev^k_{\iota, 1, 1} & \Ev^k_{\iota, 1, 2} \\ \Ev^k_{\iota, 2, 1} & \Ev^k_{\iota, 2, 2} \end{bmatrix} \in \Rb^{2\times 2}.
\end{xalignat*}
Define the complex collapse of each block by
\begin{xalignat*}{2}
\uv^k_\iota &= \Hv^k_\iota \, \rho &
\vv^k_\iota &= \rho^\transpose \Ev^k_\iota \, \rho .
\end{xalignat*}
Given the complex functions $\pzf^k$ and $\pzg^k$ from \eqref{eq:complex-amp-1}--\eqref{eq:complex-amp-2}, define the real matrix functions $f^k$ and $g^k$ in \eqref{eq:real-mat-1}--\eqref{eq:real-mat-2} so that $\Mv^k = f^k(\Hv^k)$ with $\Mv^k_\iota = \big[ \Re \pzf^k_\iota(\uv^k), \Im \pzf^k_\iota(\uv^k) \big]$, and that $\Qv^k = g^k(\Ev^k)$ has the block-wise form
\begin{xalignat*}{2}
&\Qv^k = \begin{bmatrix} \Qv^k_1 \\ \vdots \\ \Qv^k_n \end{bmatrix} &
\Qv^k_\iota &= \begin{bmatrix}
\Re \pzg^k_\iota(\vv^k) & \Im \pzg^k_\iota(\vv^k) \\
\Im \pzg^k_\iota(\vv^k) & -\Re \pzg^k_\iota(\vv^k)
\end{bmatrix} .
\end{xalignat*}
\end{definition}

\begin{remark}
The $2\times 2$ block form of $\Qv^k_\iota$ encodes the action of complex multiplication by $\overline{\rho}$: a matrix of the form $\begin{bsmallmatrix} a & b \\ b & -a \end{bsmallmatrix}$ satisfies
$\begin{bsmallmatrix} a & b \\ b & -a \end{bsmallmatrix}\rho = (a + \pzi b)\overline{\rho}$,
which is precisely Lemma~\ref{lem:prop}\ref{lem:qri} below.
This structure is a key design choice of the canonical transformation.
\end{remark}

\begin{remark}[Admissible Functions]
Definition~\ref{def:canonical-transformation} restricts the class of real-valued matrix functions that are compatible with the collapse.
A generic $g^k$ would produce blocks $\Qv^k_\iota \in \Rb^{2\times 2}$ with four degrees of freedom; the imposed structure $\begin{bsmallmatrix} a & b \\ b & -a \end{bsmallmatrix}$ has only two degrees of freedom, forcing $g^k$ to depend on~$\Ev^k_\iota$ solely through the collapsed variable $\vv^k_\iota = \rho^\transpose \Ev^k_\iota \rho$.
Similarly, for each component $\iota \in [N]$, $f^k_\iota$ is constrained to be the real--imaginary encoding of $\pzf^k_\iota$ acting on $\Hv^k \rho$.
In short, complex-compatible nonlinearities, those whose real matrix representation is fully determined by their complex counterpart, admit the reduction.
\end{remark}

\begin{remark}[Collapsed System]
The transformation is many-to-one: a block $\Ev_{\iota}^k \in \Rb^{2 \times 2}$ is mapped to a scalar $\vv_\iota \in \Cb \cong \Rb^2$, so information is lost.
Because $f^k$ and $g^k$ are restricted to the admissible class above, the orbit of the resulting system can be tracked entirely in the lower-dimensional complex domain.
Importantly, all components required for evaluating \eqref{eq:complex-amp-1}--\eqref{eq:complex-amp-2} are preserved under this mapping.
\end{remark}

Properties of the canonical transformation and its effect on the AMP dynamics are established below.

\begin{lemma}[Properties of Canonical Transformation] \label{lem:prop}
For the mapping of Definition~\ref{def:canonical-transformation}, the following equalities hold:
\begin{enumerate}[label=(\roman*), leftmargin=3em, labelwidth=3.5em, labelsep=0.5em]
\item $\Rv \Wv = \Av$, $\Hv^k \rho = \uv^k$, $\Rv \Ev^k \rho = \vv^k$, $\Mv^k \rho = \pzf^k(\uv^k)$.\label{lem:eqs}
\item $\Qv_\iota^k \rho = \pzg_\iota^k(\vv^k) \overline{\rho} \;$ for all $\iota \in [n]$.\label{lem:qri}
\item $\Qv^k \rho = \Rv^\hermitian \pzg^k(\vv^k)$, $\Rv \Qv^k = \pzg^k(\vv^k) \rho^\hermitian$.\label{lem:qr} \label{lem:rq}
\item $\frac{1}{2} \rho^\hermitian \frac{\partial}{\partial (\Hv_{\iota, :}^k)^\transpose} = \frac{\partial}{\partial \uv_\iota} \;$ for all $\iota \in [N]$.\label{lem:hu}
\item $\frac{1}{2} \begin{bmatrix}
    \frac{\partial}{\partial (\Ev^k_{\iota, 1, :})^\transpose} &
    \frac{\partial}{\partial (\Ev^k_{\iota, 2, :})^\transpose}
\end{bmatrix} \overline{\rho} = \rho \frac{\partial}{\partial \vv_\iota^k} \;$ for all $\iota\in [n]$.\label{lem:ev}
\end{enumerate}
\end{lemma}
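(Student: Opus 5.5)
The proof is a sequence of direct block-wise computations, carried out in the order the items are listed. All of them rest on three facts. First, $\Rv = \Iv_n \otimes \rho^\transpose$ acts block-wise, so $(\Rv \Xv)_\iota = \rho^\transpose \Xv_\iota$ for any matrix $\Xv$ partitioned into $2$-row blocks. Second, $\rho^\transpose \rho = 1 + \pzi^2 = 0$. Third, $\rho^\hermitian \rho = 2$ and $\rho^\hermitian \overline{\rho} = 0$.

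For item \ref{lem:eqs}, the identities follow from the definitions.
\begin{itemize}
\item $\rho^\transpose \Wv_\iota = \Re \Av_{\iota,:} + \pzi \Im \Av_{\iota,:} = \Av_{\iota,:}$, which gives $\Rv\Wv = \Av$.
\item $\Hv^k\rho = \uv^k$ row by row, by the definition of $\uv^k_\iota$.
\item $(\Rv\Ev^k\rho)_\iota = \rho^\transpose \Ev^k_\iota \rho = \vv^k_\iota$.
\item $\Mv^k_\iota \rho = \Re\pzf^k_\iota + \pzi \Im \pzf^k_\iota = \pzf^k_\iota(\uv^k)$.
\end{itemize}

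For item \ref{lem:qri}, write $\pzg^k_\iota = a + \pzi b$ and multiply out:
\[
\begin{bmatrix} a & b \\ b & -a\end{bmatrix}\begin{bmatrix}1\\ \pzi\end{bmatrix} = \begin{bmatrix} a+\pzi b \\ b - \pzi a\end{bmatrix} = (a+\pzi b)\begin{bmatrix}1\\ -\pzi\end{bmatrix}.
\]

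Item \ref{lem:qr} then follows block-wise.
\begin{itemize}
\item The $\iota$th block of $\Qv^k\rho$ is $\pzg^k_\iota\overline{\rho}$, and this is exactly the $\iota$th block of $\Rv^\hermitian \pzg^k = (\Iv_n\otimes\overline{\rho})\pzg^k$.
\item For the second identity, compute $\rho^\transpose \Qv^k_\iota = [a+\pzi b,\; b-\pzi a] = (a+\pzi b)[1,-\pzi] = \pzg^k_\iota \rho^\hermitian$. This uses that $\Qv^k_\iota$ is symmetric, so it is the transpose of \ref{lem:qri}.
\end{itemize}

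Items \ref{lem:hu} and \ref{lem:ev} are chain-rule statements. They must be read as operator identities acting on functions that depend on the real blocks only through the collapsed variables, which is the admissible class of Definition~\ref{def:canonical-transformation}.

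For \ref{lem:hu}, write $\Hv_{\iota,:} = [x, y]$ with $\uv_\iota = x + \pzi y$. Then
\[
\tfrac12 \rho^\hermitian \left[\partial_x, \partial_y\right]^\transpose = \tfrac12(\partial_x - \pzi\partial_y),
\]
which is the Wirtinger operator $\partial/\partial\uv_\iota$ by its definition in the Notations.

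For \ref{lem:ev}, the left side is the $2\times 1$ vector
\[
\tfrac12\Bigl(\tfrac{\partial}{\partial (\Ev_{\iota,1,:})^\transpose} - \pzi \tfrac{\partial}{\partial (\Ev_{\iota,2,:})^\transpose}\Bigr),
\]
and its two entries are $\tfrac12(\partial_{E_{11}} - \pzi\partial_{E_{21}})$ and $\tfrac12(\partial_{E_{12}} - \pzi\partial_{E_{22}})$.

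Expanding $\vv_\iota = E_{11} + \pzi E_{12} + \pzi E_{21} - E_{22}$ gives $\Re\vv_\iota = E_{11}-E_{22}$ and $\Im\vv_\iota = E_{12}+E_{21}$. For a function $h(\Re\vv_\iota, \Im\vv_\iota)$, the chain rule gives:
\begin{itemize}
\item first entry: $\tfrac12(h_{\Re} - \pzi h_{\Im}) = \partial h/\partial\vv_\iota$;
\item second entry: $\tfrac12(h_{\Im} + \pzi h_{\Re}) = \pzi\cdot\tfrac12(h_{\Re} - \pzi h_{\Im}) = \pzi\,\partial h/\partial\vv_\iota$.
\end{itemize}
Together these give $\rho\,\partial/\partial\vv_\iota$, as claimed.

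I expect the main obstacle to be this last item. Its content is stated as an operator identity, but it only holds on functions of the collapsed variable. On a general function of $\Ev_\iota$ the four partial derivatives are independent and the identity fails. The proof should therefore state explicitly that \ref{lem:hu} and \ref{lem:ev} are applied to $f^k$ and $g^k$ in the admissible class, where dependence on $\Hv^k_\iota$ and $\Ev^k_\iota$ is only through $\uv_\iota$ and $\vv_\iota$. The bookkeeping of which index is the row and which the column in $\Ev_{\iota,r,:}$, and hence whether $\rho$ or $\overline{\rho}$ appears, also needs care.
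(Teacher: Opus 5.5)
Your proposal is correct and takes the same route as the paper. Items (i)--(iii) are block-wise multiplication. Items (iv)--(v) expand the derivative operators against $\rho^\hermitian$ and $\overline{\rho}$ and apply the chain rule through $\Re\vv_\iota = E_{11}-E_{22}$ and $\Im\vv_\iota = E_{12}+E_{21}$. You state the same caveat as the paper, that (v) holds only on functions depending on $\Ev^k_\iota$ through $\vv^k_\iota$. The only cosmetic difference is that you get $\rho^\transpose \Qv^k_\iota = \pzg^k_\iota \rho^\hermitian$ by transposing (ii) using the symmetry of $\Qv^k_\iota$, whereas the paper multiplies it out directly.
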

\begin{IEEEproof}
Throughout, recall the following notations and properties,
\begin{xalignat*}{3}
\rho &= \begin{bmatrix} 1 \\ \pzi \end{bmatrix} &
\overline{\rho} &= \begin{bmatrix} 1 \\ -\pzi \end{bmatrix} &
\rho^\hermitian \rho &= 2
\end{xalignat*}
and $\Rv = \Iv_n \otimes \rho^\transpose$.

Consider \ref{lem:eqs}.
The identity $\Rv \Wv = \Av$ was introduced in \eqref{eq:StateCollapse1}.
For each $\iota \in [n]$, the $\iota$-th block of $\Rv \Wv$ is
\begin{equation*}
\rho^\transpose \Wv_\iota
= \begin{bmatrix} 1 & \pzi \end{bmatrix} \begin{bmatrix} \Re \Av_{\iota,:} \\ \Im \Av_{\iota,:} \end{bmatrix}
= \Re \Av_{\iota,:} + \pzi \, \Im \Av_{\iota,:} = \Av_{\iota,:} .
\end{equation*}
The identity $\Hv^k \rho = \uv^k$ is immediate from Definition~\ref{def:canonical-transformation}, which states that $\uv^k_\iota = \Hv^k_\iota \rho$.
For $\Rv \Ev^k \rho = \vv^k$, note that the $\iota$-th component of $\Rv \Ev^k \rho$ is $\rho^\transpose \Ev^k_\iota \rho = \vv^k_\iota$, again by definition.
To verify the expansion explicitly,
\begin{equation*}
\begin{split}
\vv^k_\iota &= \left[ \Rv \Ev^k \rho \right]_{\iota}
= \left[ \left( \Iv_n \otimes \rho^\transpose \right) \Ev^k \rho \right]_{\iota} \\
&= \rho^\transpose \Ev^k_\iota \rho = \begin{bmatrix} 1 & \pzi \end{bmatrix} \begin{bmatrix} \Ev^k_{\iota,1,1} & \Ev^k_{\iota,1,2} \\ \Ev^k_{\iota,2,1} & \Ev^k_{\iota,2,2} \end{bmatrix} \begin{bmatrix} 1 \\ \pzi \end{bmatrix} \\
&= (\Ev^k_{\iota,1,1} - \Ev^k_{\iota,2,2}) + \pzi (\Ev^k_{\iota,1,2} + \Ev^k_{\iota,2,1}) .
\end{split}
\end{equation*}
The identity $\Mv^k \rho = \pzf^k(\uv^k)$ follows by the same reasoning as $\Hv^k \rho = \uv^k$, since by definition $\Mv^k_\iota = \big[ \Re \pzf^k_\iota(\uv^k), \Im \pzf^k_\iota(\uv^k) \big]$.

For \ref{lem:qri}, write $a_\iota = \Re \pzg^k_\iota(\vv^k)$ and $b_\iota = \Im \pzg^k_\iota(\vv^k)$.
By Definition~\ref{def:canonical-transformation}, we have
\begin{equation*}
\begin{split}
\Qv^k_\iota \rho &= \begin{bmatrix} a_\iota & b_\iota \\ b_\iota & -a_\iota \end{bmatrix} \begin{bmatrix} 1 \\ \pzi \end{bmatrix} = \begin{bmatrix} a_\iota + \pzi b_\iota \\ b_\iota - \pzi a_\iota \end{bmatrix} \\
 &= (a_\iota + \pzi b_\iota) \begin{bmatrix} 1 \\ -\pzi \end{bmatrix} = \pzg^k_\iota(\vv^k) \, \overline{\rho} .
\end{split}
\end{equation*}

Consider \ref{lem:qr}.
Stacking \ref{lem:qri} over all $\iota \in [n]$ and noting that $\Rv^\hermitian = \Iv_n \otimes \overline{\rho}$, the $\iota$-th block of $\Rv^\hermitian \pzg^k(\vv^k)$ is $\overline{\rho} \, \pzg^k_\iota(\vv^k) = \pzg^k_\iota(\vv^k) \overline{\rho}$, establishing $\Qv^k \rho = \Rv^\hermitian \pzg^k(\vv^k)$.
For the second identity, the $\iota$-th component of $\Rv \Qv^k$ is
\begin{equation*}
\begin{split}
\rho^\transpose \Qv^k_\iota &= \begin{bmatrix} 1 & \pzi \end{bmatrix} \begin{bmatrix} a_\iota & b_\iota \\ b_\iota & -a_\iota \end{bmatrix} \\
&= (a_\iota + \pzi b_\iota) \begin{bmatrix} 1 & -\pzi \end{bmatrix} = \pzg^k_\iota(\vv^k) \, \rho^\hermitian .
\end{split}
\end{equation*}

To establish \ref{lem:hu}, we write $\Hv^k_{\iota, :} = [ h_1, h_2 ]$ with $h_1 = \Re \uv^k_\iota$ and $h_2 = \Im \uv^k_\iota$.
Then
\begin{equation*}
\frac{1}{2} \rho^\hermitian \frac{\partial}{\partial (\Hv^k_{\iota,:})^\transpose} = \frac{1}{2} \begin{bmatrix} 1 & -\pzi \end{bmatrix} \begin{bmatrix} \partial / \partial h_1 \\ \partial / \partial h_2 \end{bmatrix} = \frac{1}{2} \left( \frac{\partial}{\partial h_1} - \pzi \frac{\partial}{\partial h_2} \right) .
\end{equation*}
This is precisely the Wirtinger derivative $\frac{\partial}{\partial \uv_\iota}$ with respect to $\uv_\iota = h_1 + \pzi h_2$.

Finally, \ref{lem:ev} can be obtained by writing the entries of $\Ev^k_{\iota, :, :} \in \Rb^{2 \times 2}$ as $e_{jl}$ for $j, l \in \{1, 2\}$.
The left-hand side is
\begin{equation*}
\frac{1}{2} \begin{bmatrix} \partial/\partial e_{11} & \partial/\partial e_{21} \\ \partial/\partial e_{12} & \partial/\partial e_{22} \end{bmatrix} \overline{\rho} = \frac{1}{2} \begin{bmatrix} \partial/\partial e_{11} - \pzi \, \partial/\partial e_{21} \\ \partial/\partial e_{12} - \pzi \, \partial/\partial e_{22} \end{bmatrix} .
\end{equation*}
Since $g^k$ depends on $\Ev^k_\iota$ only through $\vv^k_\iota = (e_{11} - e_{22}) + \pzi(e_{12} + e_{21})$, the chain rule gives
$\frac{\partial}{\partial e_{11}} = \frac{\partial}{\partial (\Re \vv_\iota)}$, $\frac{\partial}{\partial e_{22}} = -\frac{\partial}{\partial (\Re \vv_\iota)}$, $\frac{\partial}{\partial e_{12}} = \frac{\partial}{\partial (\Im \vv_\iota)}$, and $\frac{\partial}{\partial e_{21}} = \frac{\partial}{\partial (\Im \vv_\iota)}$.
Substituting,
\begin{equation*}
\begin{split}
&\frac{1}{2} \begin{bmatrix} \partial/\partial (\Re \vv_\iota) - \pzi \, \partial/\partial (\Im \vv_\iota) \\ \partial/\partial (\Im \vv_\iota) + \pzi \, \partial/\partial (\Re \vv_\iota) \end{bmatrix}
 \\
&\quad= \frac{1}{2} \left( \frac{\partial}{\partial (\Re \vv_\iota)} - \pzi \frac{\partial}{\partial (\Im \vv_\iota)} \right) \begin{bmatrix} 1 \\ \pzi \end{bmatrix}
= \frac{\partial}{\partial \vv^k_\iota} \, \rho,
\end{split}
\end{equation*}
where the last step again uses the Wirtinger derivative $\frac{\partial}{\partial \vv_\iota} = \frac{1}{2}\!\left( \frac{\partial}{\partial (\Re \vv_\iota)} - \pzi \frac{\partial}{\partial (\Im \vv_\iota)} \right)$ and the factorization of the column vector as a scalar times $\rho$.
\end{IEEEproof}

\begin{proposition}[Dynamics of Collapsed System]
\label{prop:red-dynamics}
     Under the canonical transformation of Definition~\ref{def:canonical-transformation}, with $f^k$ and $g^k$ restricted to the admissible class therein, \eqref{eq:real-mat-1}--\eqref{eq:real-mat-2} imply \eqref{eq:complex-amp-1}--\eqref{eq:complex-amp-2}.
\end{proposition}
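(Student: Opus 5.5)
The plan is to collapse each real matrix recursion with the maps from Lemma~\ref{lem:prop}. For \eqref{eq:real-mat-1} I will right-multiply by $\rho$; for \eqref{eq:real-mat-2} I will left-multiply by $\Rv$ and right-multiply by $\rho$. The identities of Lemma~\ref{lem:prop} then turn every term into its complex counterpart. The one nontrivial step is showing that the Onsager matrices $\Cv^k$ and $\Bv^k$ act on $\rho$ as the scalars $c^k$ and $b^k$. Throughout, $m=2n$ and $M=N$.

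\textbf{First recursion.} Multiplying \eqref{eq:real-mat-1} by $\rho$ gives $\Hv^{k+1}\rho=\uv^{k+1}$ and $\Mv^k\rho=\pzf^k(\uv^k)$ by Lemma~\ref{lem:prop}\ref{lem:eqs}. By Lemma~\ref{lem:prop}\ref{lem:qr}, $\Wv^\transpose\Qv^k\rho=\Wv^\transpose\Rv^\hermitian\pzg^k(\vv^k)$. Since $\Wv$ is real, $\Wv^\transpose\Rv^\hermitian=(\Rv\Wv)^\hermitian=\Av^\hermitian$, so this term equals $\Av^\hermitian\pzg^k(\vv^k)$. It remains to show $(\Cv^k)^\transpose\rho=c^k\rho$. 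Componentwise,
\[
\big((\Cv^k)^\transpose\rho\big)_{l}=\frac{1}{2n}\sum_{j=1}^{2n}\frac{\partial (\Qv^k_{j,:}\rho)}{\partial \Ev^k_{j,l}} .
\]
By Lemma~\ref{lem:prop}\ref{lem:qri}, the two rows of block $\iota$ give $\Qv^k_{j,:}\rho=\pzg^k_\iota$ and $-\pzi\,\pzg^k_\iota$, respectively. Grouping the sum by blocks, block $\iota$ contributes
\[
\Big(\tfrac{\partial}{\partial e_{1l}}-\pzi\tfrac{\partial}{\partial e_{2l}}\Big)\pzg^k_\iota .
\]
This is exactly the structure in Lemma~\ref{lem:prop}\ref{lem:ev}, so the block contribution is $2\rho_l\,(\pzg^k)'_\iota$. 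Concretely, $l=1$ gives $\partial_{\Re\vv}-\pzi\partial_{\Im\vv}=2\partial_{\vv}$, and $l=2$ gives $\partial_{\Im\vv}+\pzi\partial_{\Re\vv}=2\pzi\,\partial_{\vv}$. Summing over blocks, $(\Cv^k)^\transpose\rho=\frac{1}{n}\sum_\iota(\pzg^k)'_\iota\,\rho=c^k\rho$. Hence $\Mv^k(\Cv^k)^\transpose\rho=c^k\pzf^k(\uv^k)$, which yields \eqref{eq:complex-amp-1}.

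\textbf{Second recursion.} Applying $\Rv(\cdot)\rho$ to \eqref{eq:real-mat-2}, Lemma~\ref{lem:prop}\ref{lem:eqs} gives $\Rv\Ev^k\rho=\vv^k$ and $\Rv\Wv\Mv^k\rho=\Av\pzf^k(\uv^k)$. Lemma~\ref{lem:prop}\ref{lem:rq} gives $\Rv\Qv^{k-1}(\Bv^k)^\transpose\rho=\pzg^{k-1}(\vv^{k-1})\,\rho^\hermitian(\Bv^k)^\transpose\rho$. So it suffices to show that the scalar $\rho^\hermitian(\Bv^k)^\transpose\rho$ equals $b^k$. As before, $\big((\Bv^k)^\transpose\rho\big)_l=\frac{1}{2n}\sum_{j=1}^N\partial(\Mv^k_{j,:}\rho)/\partial\Hv^k_{j,l}$, and $\Mv^k_{j,:}\rho=\pzf^k_j$. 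Contracting with $\rho^\hermitian$ and using Lemma~\ref{lem:prop}\ref{lem:hu},
\[
\rho^\hermitian(\Bv^k)^\transpose\rho=\frac{1}{2n}\sum_{j=1}^{N}2(\pzf^k)'_j=\frac{N}{n}\langle(\pzf^k)'\rangle=b^k,
\]
which yields \eqref{eq:complex-amp-2}.

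\textbf{Main obstacle.} The delicate part is the Onsager computation. A priori $\Cv^k$ is a general $2\times 2$ real matrix, and for non-holomorphic $\pzg^k$ one might expect conjugate-Wirtinger terms $\partial/\partial\overline{\vv}$ to survive. The step to check carefully is the bookkeeping of transposes: one must contract the row index of $\Qv_{j,:}$ with $\rho$ first, so that the combination formed matches Lemma~\ref{lem:prop}\ref{lem:ev} exactly. The structured form $\begin{bsmallmatrix} a & b \\ b & -a\end{bsmallmatrix}$ is what ensures that only $\partial/\partial\vv$ appears. The same care with the $1/m=1/(2n)$ normalization is needed to recover $1/\delta$ in $b^k$.
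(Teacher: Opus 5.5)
Your proof is correct and follows essentially the same route as the paper: you right-multiply the first recursion by $\rho$ and apply $\Rv(\cdot)\rho$ to the second, then reduce the Onsager terms via Lemma~\ref{lem:prop}\ref{lem:qri} with \ref{lem:ev}, and \ref{lem:rq} with \ref{lem:hu}, respectively. Your row-by-row contraction is just an unpacked version of the paper's block computation. You also correctly evaluate only the scalar $\rho^\hermitian(\Bv^k)^\transpose\rho$, rather than claiming $(\Bv^k)^\transpose\rho$ is a multiple of $\rho$, which would fail for non-holomorphic $\pzf^k$.
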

\begin{IEEEproof} Using Lemma~\ref{lem:prop}\ref{lem:eqs}, we can write
\begin{equation*}
\uv^{k+1} = \Hv^{k+1} \rho = \Wv^\transpose \Qv^k \rho - \Mv^k (\Cv^k)^\transpose \rho.
\end{equation*}
We can simplify the first term using Lemma~\ref{lem:prop}\ref{lem:qr} as
\begin{equation*}
\Wv^\transpose \Qv^k \rho = \Wv^\transpose \Rv^\hermitian \pzg^k(\vv^k)
= (\Rv \Wv)^\hermitian \pzg^k(\vv^k) = \Av^\hermitian \pzg^k(\vv^k).
\end{equation*}
For the second term, we need to compute
\begin{equation*}
\begin{split}
(\Cv^k)^\transpose
&= \textstyle \frac{1}{n} \sum_{\iota=1}^n \frac{1}{2} \begin{bmatrix}
\frac{\partial}{\partial (\Ev^k_{\iota, 1, :})^\transpose} &
\frac{\partial}{\partial (\Ev^k_{\iota, 2, :})^\transpose}
\end{bmatrix} \Qv_\iota^k.
\end{split}
\end{equation*}
Using Lemma~\ref{lem:prop}\ref{lem:qri} and Lemma~\ref{lem:prop}\ref{lem:ev}, we get $(\Cv^k)^\transpose \rho = \langle (\pzg^k)'(\vv^k) \rangle \rho$, then $\Mv^k (\Cv^k)^\transpose \rho = \langle (\pzg^k)'(\vv^k) \rangle \pzf^k(\uv^k)$.
This proves \eqref{eq:complex-amp-1}.

We turn to the second equation in the composite iteration,
\begin{equation*}
\vv^{k} = \Rv \Ev^k \rho = \Rv \Wv \Mv^k \rho - \Rv \Qv^{k-1} (\Bv^k)^\transpose \rho.
\end{equation*}
Using Lemma~\ref{lem:prop}\ref{lem:eqs}, we readily obtain $\Rv \Wv \Mv^k \rho = \Av \pzf^k(\uv^k)$.

Proceeding to the second term, we expand $(\Bv^{k})^\transpose = \frac{1}{n}\sum_{\iota=1}^N \left(\frac{1}{2} \frac{\partial}{\partial (\Hv_{\iota,:}^k)^\transpose} \right) (\Mv^k_{\iota, :})$.
Applying Lemma~\ref{lem:prop}\ref{lem:rq} and Lemma~\ref{lem:prop}\ref{lem:hu}, we get
\begin{equation*}
\begin{split}
\Rv \Qv^{k-1} (\Bv^k)^\transpose \rho
&= \frac{1}{\delta} \langle (\pzf^k)'(\uv^k)\rangle \pzg^{k-1}(\vv^{k-1}),
\end{split}
\end{equation*}
which proves \eqref{eq:complex-amp-2}.
\end{IEEEproof}

Thus, the dynamics of the augmented real system, when collapsed through the canonical transformation, reproduce the complex AMP iteration \eqref{eq:complex-amp-1}--\eqref{eq:complex-amp-2}.
It remains to establish state evolution for this system.

\subsection{Regularity Assumptions and State Evolution}
\label{sec:state-evolution}

The proof strategy proceeds in three stages.
First, we state regularity conditions on the complex functions $\pzf^k$ and $\pzg^k$ (Assumption~\ref{assumption:reg-complex} below).
These conditions are deliberately chosen so that, when combined with the canonical transformation of Definition~\ref{def:canonical-transformation}, the induced real matrix functions $f^k$ and $g^k$ satisfy the regularity assumptions of the real matrix AMP framework in~\cite{gerbelot2023amp}.
Since $f^k$ and $g^k$ are defined through $\pzf^k$ and $\pzg^k$ via the admissible class of Definition~\ref{def:canonical-transformation}, this transfer of regularity is natural, though it must still be verified.
Second, having established that the augmented real system meets the conditions of~\cite{gerbelot2023amp}, we invoke the state evolution result therein.
Third, we collapse back through the canonical transformation to obtain state evolution for the complex system (Theorem~\ref{thm:complex-amp-main}).

\begin{assumption}[Regularity Conditions for Complex AMP]
\label{assumption:reg-complex}
Consider a sequence of recursions, each of the form \eqref{eq:complex-amp-1}--\eqref{eq:complex-amp-2}, indexed by $n$ and $N = N(n)$ with $n/N \to \delta \in (0, \infty)$ as $n\to \infty$.
In this regime, assume the following holds.
\begin{list}{(A\arabic{enumi})}
  {\usecounter{enumi}
  \setlength{\leftmargin}{3em}
  \setlength{\labelwidth}{3.5em}
  \setlength{\labelsep}{0.5em}}
\item $\Av_{i, j} \equiv \Av_{i, j}(n) \sim \cC\cN(0, 1/n)$ i.i.d.
\item The sequences of functions $\{\pzf^k\}$ and $\{\pzg^k\}$ are uniformly Lipschitz, where uniformity is with respect to $n$.
\item For $\uv^0 \equiv \uv^0(n)$, $\norm{\uv^0}_2 / \sqrt{n}$ converges as $n\to \infty$.
\item The following limits exist and are finite:
\begin{xalignat*}{2}
\lim_{n \uparrow \infty} & \tfrac{1}{n} \langle \pzf^0(\uv^0), \pzf^0(\uv^0) \rangle, &
\lim_{n \uparrow \infty} & \tfrac{1}{n} \langle \pzf^0(\uv^0), \overline{\pzf^0(\uv^0)} \rangle.
\end{xalignat*}
\item For any $\sigma > 0$ and $c \in \Cb$, the following limits exist for $\Zv \sim \cC\cN(\zerov, \sigma^2 \otimes \Iv_N, c \otimes \Iv_N)$ and are finite:
\begin{xalignat*}{2}
\lim_{n \uparrow \infty} & \tfrac{1}{n} \Eb[ \langle \pzf^0(\uv^0), \pzf^0(\Zv) \rangle], &
\lim_{n \uparrow \infty} & \tfrac{1}{n} \Eb[ \langle \pzf^0(\uv^0), \overline{\pzf^0(\Zv)} \rangle].
\end{xalignat*}
\item For any $r, s > 0$, $\Sv \in \Cb^{2\times 2}$ Hermitian positive definite matrix, and $\Cv \in \Cb^{2\times 2}$ complex symmetric matrix, the following limits exist and are finite:
\begin{align*}
\lim_{n\uparrow \infty} & \tfrac{1}{n} \Eb[\langle \pzf^r(\Zv_\tau^r), \pzf^s(\Zv_\tau^s) \rangle],
\lim_{n\uparrow \infty} \tfrac{1}{n} \Eb[\langle \pzf^r(\Zv_\tau^r), \overline{\pzf^s(\Zv_\tau^s)} \rangle] \\
\lim_{n\uparrow \infty} & \tfrac{1}{n} \Eb[\langle \pzg^r(\Zv_\sigma^r), \pzg^s(\Zv_\sigma^s) \rangle],
\lim_{n\uparrow \infty} \tfrac{1}{n} \Eb[\langle \pzg^r(\Zv_\sigma^r), \overline{\pzg^s(\Zv_\sigma^s)} \rangle]
\end{align*}
where
    $(\Zv_\tau^r, \Zv_\tau^s) \sim \cC\cN(\zerov, \Sv \otimes \Iv_N, \Cv \otimes \Iv_N)$ and $(\Zv_\sigma^r, \Zv_\sigma^s) \sim \cC\cN(\zerov, \Sv \otimes \Iv_n, \Cv \otimes \Iv_n)$.
\end{list}
\end{assumption}

The reader familiar with the AMP literature will recognize that Assumption~\ref{assumption:reg-complex} enables the definition of two doubly infinite arrays of complex scalars, $(\sigma^{s, r}, \tau^{s,r})_{s,r\le t}$, through the usual recursion stated below.

\begin{definition}[Complex State Evolution]
\label{def:state-evol-complex}
Suppose Assumption~\ref{assumption:reg-complex} holds. Define $(\sigma^{s, r}, \tau^{s,r})_{s,r\le t}$ through the following recursion. Given $\uv^0\in \Cb^N$, define the initialization
\begin{align}
\label{eq:init-complex}
    \sigma^{0,0} \triangleq \frac{1}{\delta} \lim_{N \to \infty}  \frac{1}{N} \langle \pzf^0(\uv^0), \pzf^0(\uv^0) \rangle \in \Cb.
\end{align}
Assuming that $(\sigma_{s,r})_{0\le s, r\le t}, (\tau_{s,r})_{1\le s,r\le t}$ are available, denoting $(\Zv_{\sigma}^0, \dots, \Zv_{\sigma}^t) \sim \cC\cN(\zerov, (\sigma_{s,r})_{s,r\le t} \otimes \Iv_n)$, define for $0\le s \le t$:
\begin{align}
\label{eq:tau-complex}
    \overline{\tau^{t+1, s+1}} = \tau^{s+1, t+1} =  \lim_{n \to \infty} \frac{1}{n} \Eb[\langle \pzg^s(\Zv_\sigma^s), \pzg^t(\Zv_\sigma^t) \rangle],
\end{align}
which characterizes $(\tau_{s,r})_{1\le s,r\le t+1}$. Next, set $(\Zv_\tau^1, \dots, \Zv_{\tau}^{t+1}) \sim \cC\cN(\zerov, (\tau_{s,r})_{1\le s,r \le t+1} \otimes \Iv_N)$, and define
\begin{align}
\label{eq:sigma-complex}
    \overline{\sigma^{t+1, s}} = \sigma^{s,t+1} = \frac{1}{\delta} \lim_{N\to \infty} \frac{1}{N} \Eb[\langle \pzf^s(\Zv_\tau^s), \pzf^{t+1}(\Zv_\tau^{t+1})\rangle],
\end{align}
which characterizes $(\sigma_{s,r})_{0\le s,r \le t+1}$.
\end{definition}

\begin{remark}
In the real matrix AMP setting~\cite{gerbelot2023amp}, the analogous SE arrays $(\Sigmav^{s,r}, \Tv^{s,r})$ are $2\times 2$ real matrices (see Definition~\ref{def:state-evol-real} in the appendix).
By contrast, the complex SE arrays $(\sigma^{s,r}, \tau^{s,r})$ are complex scalars.
This reduction from $2\times 2$ matrices to scalars is a direct consequence of the canonical transformation and reflects the rotational symmetry of the complex Gaussian distribution.
\end{remark}

With the SE arrays in hand, we can now state the main result.

\begin{theorem}[State Evolution for Complex AMP]
\label{thm:complex-amp-main}
Consider the complex AMP system~\eqref{eq:complex-amp-1}--\eqref{eq:complex-amp-2} and suppose Assumption~\ref{assumption:reg-complex} holds.
Let $\left( \Zv_{\sigma}^0, \ldots, \Zv_{\sigma}^t \right)$ and $\left( \Zv_{\tau}^1, \ldots, \Zv_{\tau}^t \right)$ be centered complex Gaussian vectors of covariances $\left( \sigma_{s, r} \right)_{0\le s, r \leq t} \otimes \Iv_n$ and  $\left( \tau_{s, r} \right)_{1\le s, r \leq t} \otimes \Iv_N$, respectively.
Then, for any sequence of pseudo-Lipschitz functions $\Phiv_n : \left( \mathbb{C}^N \times \mathbb{C}^n \right)^t \times \mathbb{C}^N \rightarrow \mathbb{C}$,
\begin{equation*}
\begin{split}
&\Phiv_n \left( \uv^0, \vv^0, \ldots, \uv^{t-1}, \vv^{t-1}, \uv^t \right) \\
&\hspace{20pt} \stackrel{\mathrm{P}}{\simeq} \mathbb{E} \left[ \Phiv_n \left( \uv^0, \Zv_{\sigma}^0, \ldots, \Zv_{\tau}^{t-1}, \Zv_{\sigma}^{t-1}, \Zv_{\tau}^t \right) \right].
\end{split}
\end{equation*}
\end{theorem}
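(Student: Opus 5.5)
We follow the three-stage strategy announced before Assumption~\ref{assumption:reg-complex}. The aim is to realize the complex iteration \eqref{eq:complex-amp-1}--\eqref{eq:complex-amp-2} as the collapse of the real matrix iteration \eqref{eq:real-mat-1}--\eqref{eq:real-mat-2}, invoke the state evolution theorem of~\cite{gerbelot2023amp} for the latter, and push the conclusion through the collapse.

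\textbf{Stage 1: the augmented system meets the hypotheses of~\cite{gerbelot2023amp}.} Take $\Wv$ as in \eqref{eq:CanonicalMap1}, so $m=2n$ and $M=N$. Its entries are i.i.d.\ $\cN(0,1/(2n))$, since $\Re \Av_{i,j},\Im\Av_{i,j}$ are independent with variance $1/(2n)$; this is the correct normalization $1/m$. Take $f^k,g^k$ as in Definition~\ref{def:canonical-transformation}.

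\emph{Lipschitz property.} The maps $\Hv\mapsto \Hv\rho$ and $\Ev_\iota\mapsto \rho^\transpose\Ev_\iota\rho$ are linear with operator norm at most~$2$. The encodings $z\mapsto[\Re z,\Im z]$ and $z\mapsto\begin{bsmallmatrix}\Re z&\Im z\\ \Im z&-\Re z\end{bsmallmatrix}$ are isometries up to a factor $\sqrt2$. Hence uniform Lipschitz continuity of $\pzf^k,\pzg^k$ (A2) transfers to $f^k,g^k$.

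\emph{Initialization and covariance limits.} The real SE of~\cite{gerbelot2023amp} requires limits of $\frac1m (f^r)^\transpose f^s$ and $\frac1m (g^r)^\transpose g^s$, as $2\times2$ matrices, evaluated at real Gaussian matrices with rows i.i.d.\ of arbitrary covariance. The $2\times2$ matrix $\frac1n\Mv^{r\transpose}\Mv^s$ is determined by the two complex numbers $\frac1n\langle \pzf^r,\pzf^s\rangle$ and $\frac1n\langle \pzf^r,\overline{\pzf^s}\rangle$: the former gives $\Re\Re+\Im\Im$ and $\Im\Re-\Re\Im$, the latter gives $\Re\Re-\Im\Im$ and $\Im\Re+\Re\Im$. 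The same holds for $\Qv$ blocks, where $\Qv_\iota^{r\transpose}\Qv_\iota^s$ is expressible through $g^r\overline{g^s}$ and $g^rg^s$. A real Gaussian pair of $N\times2$ matrices collapses via $\rho$ to an improper complex Gaussian pair with some Hermitian covariance $\Sv$ and pseudo-covariance $\Cv$. For the $\Ev$-side, $g$ depends on $\Ev_\iota$ only through $\rho^\transpose\Ev_\iota\rho$, which is again complex Gaussian. Therefore (A3)--(A6) give exactly the existence of the required limits. This explains why (A4)--(A6) include both the pseudo-covariance and the conjugated inner products.

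\textbf{Stage 2: apply~\cite{gerbelot2023amp}.} Its theorem yields real SE arrays $(\Sigmav^{s,r},\Tv^{s,r})$ of Definition~\ref{def:state-evol-real}. It then gives, for every pseudo-Lipschitz $\Psi_n$ of the real iterates $(\Hv^0,\Ev^0,\dots,\Hv^t)$,
\[
\Psi_n(\Hv^0,\Ev^0,\dots,\Hv^t)\stackrel{\mathrm P}{\simeq}\Eb[\Psi_n(\Hv^0,\tilde\Zv_\Sigma^0,\dots,\tilde\Zv_T^t)],
\]
where the $\Ev$-side Gaussian has $2n$ rows (pairs of rows forming blocks) with covariances $\Sigmav$.

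\textbf{Stage 3: collapse.} Given complex pseudo-Lipschitz $\Phiv_n$, define
\[
\Psi_n(\Hv^0,\Ev^0,\dots)=\Phiv_n(\Hv^0\rho,\Rv\Ev^0\rho,\dots,\Hv^t\rho).
\]
This is pseudo-Lipschitz because the collapse maps are linear and bounded. Real and imaginary parts are handled separately. By Proposition~\ref{prop:red-dynamics} and Lemma~\ref{lem:prop}\ref{lem:eqs}, $\Psi_n$ evaluated at the real iterates equals $\Phiv_n(\uv^0,\vv^0,\dots,\uv^t)$. On the Gaussian side, $\tilde\Zv_T\rho$ and $\Rv\tilde\Zv_\Sigma\rho$ are jointly complex Gaussian across iterations with i.i.d.\ coordinates.

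The main obstacle is this final identification. We must show that their covariance $\Eb[z^s\overline{z^r}]$ equals $\tau^{s,r}$ (resp.\ $\sigma^{s,r}$) from Definition~\ref{def:state-evol-complex}, and that the pseudo-covariance is consistent. I would proceed by induction on $t$, running the real recursion and the complex recursion side by side. Assume $\Rho^\hermitian$-type contractions of $\Tv^{s,r}$ and $\Sigmav^{s,r}$ match the complex arrays. Then, using the $\rho$-contractions of $\frac1m\Qv^{s\transpose}\Qv^t$ together with Lemma~\ref{lem:prop}\ref{lem:qr}, we get
\[
\rho^\hermitian\big(\tfrac1{2n}\Qv^{s\transpose}\Qv^t\big)\rho=\tfrac1{2n}\,\overline{\pzg^s}^{\transpose}\Rv\Rv^\hermitian\pzg^t=\tfrac1n\langle\pzg^t,\pzg^s\rangle,
\]
using $\Rv\Rv^\hermitian=2\Iv$. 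The result should be a conjugate-ordered match with \eqref{eq:tau-complex}, and analogously \eqref{eq:sigma-complex} via $\Mv\rho=\pzf$ with $\delta=n/N$. The factors of $2$ from $\rho^\hermitian\rho=2$ and $m=2n$ must cancel. For the $\Ev$-side, a Gaussian $\Ev_\iota$ block collapses via $\rho^\transpose\Ev_\iota\rho$, which involves both rows. The rows are i.i.d., so the collapsed variance is $2\,\rho^\hermitian\Sigmav\rho$, and we need this to reproduce $\sigma$. The expectation under the complex Gaussian is taken with whatever pseudo-covariance arises. I expect the real and imaginary parts combine properly, so that the limit on the right-hand side equals the one stated in Theorem~\ref{thm:complex-amp-main}.
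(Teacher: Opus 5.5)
\textbf{Gap: the collapse step is unproven, and properness is missing.} Your three stages follow the paper's route: regularity transfer as in Lemma~\ref{lem:complex-reg-real-lifted}, then \cite[Thm.~2]{gerbelot2023amp}, then lifting $\Phiv_n$ to $[\Re\Phiv_n,\Im\Phiv_n]$ and collapsing. However, the step you call the main obstacle, which is the content of Lemma~\ref{lem:CollapsedState}, is left as an expectation. What is missing is substantive.

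You never show that the collapsed Gaussians $\zetav_\tau^s\rho$ and $\Rv\zetav_\sigma^s\rho$ have vanishing pseudo-covariance. Instead you take the expectation ``with whatever pseudo-covariance arises''. The right-hand side of the theorem uses Gaussians specified only by $(\sigma_{s,r})$ and $(\tau_{s,r})$, i.e.\ the proper laws $\cC\cN(\zerov,\cdot)$ of Definition~\ref{def:state-evol-complex}. A complex Gaussian is not determined by its covariance alone.

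Your induction also cannot close without properness. The real recursion evaluates $\pzg^s,\pzf^s$ at $\Rv\zetav_\sigma^s\rho$ and $\zetav_\tau^s\rho$, while the complex recursion evaluates them at proper Gaussians. These are arbitrary non-holomorphic, non-separable maps, so the arrays agree only if the laws agree. Your contraction $\rho^\hermitian(\cdot)\rho$ fixes only two of the four real degrees of freedom of each $2\times 2$ array. The other two are exactly $\rho^\transpose(\cdot)\rho$, the pseudo-covariance, which you leave uncontrolled.

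\textbf{How to close it.} The missing facts are short.

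On the $\tau$ side, the pseudo-covariance of $\zetav_\tau\rho$ is $\rho^\transpose\Tv^{s+1,t+1}\rho$. By Lemma~\ref{lem:prop}\ref{lem:qr}, $\rho^\transpose(\Qv^s)^\transpose\Qv^t\rho=(\Qv^s\rho)^\transpose\Qv^t\rho=(\pzg^s)^\transpose\overline{\Rv}\Rv^\hermitian\pzg^t=0$, since $\overline{\rho}^\transpose\overline{\rho}=1+(-\pzi)^2=0$. Equivalently, the $\begin{bsmallmatrix} a & b\\ b & -a\end{bsmallmatrix}$ structure forces $\Tv_{1,1}=\Tv_{2,2}$ and $\Tv_{1,2}=-\Tv_{2,1}$, which is what the paper checks entrywise.

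On the $\sigma$ side, $\Sigmav^{s,t}$ is a generic $2\times 2$ matrix, so properness must come from the two-row collapse. Write $w^s_j=\zetav^s_{\iota,j,:}\rho$ for the two i.i.d.\ rows of a block, so that $\rho^\transpose\zetav^s_\iota\rho=w^s_1+\pzi w^s_2$. Then $\Eb[Z^sZ^t]=(1+\pzi^2)\Eb[w^s_1w^t_1]=0$ for every $\Sigmav$. Meanwhile $\Eb[Z^s\overline{Z^t}]=2\rho^\hermitian\Sigmav^{t,s}\rho=\sigma^{s,t}$, using $\Mv\rho=\pzf$, $m=2n$ and $\delta=n/N$.

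With both facts, the induction closes and the Gaussian side of \cite[Thm.~2]{gerbelot2023amp} collapses to exactly the law in the statement.
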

\begin{IEEEproof} This theorem establishes conditions under which the AMP dynamics defined in \eqref{eq:complex-amp-1} and \eqref{eq:complex-amp-2} satisfy the requirements for state evolution in Definition~\ref{def:state-evol-complex}.
Specifically, if the complex AMP system satisfies Assumption~\ref{assumption:reg-complex}, then, under the canonical transformation of Definition~\ref{def:canonical-transformation}, the augmented real-valued matrix AMP system satisfies the conditions required to define state evolution for the real system (see Appendix~\ref{sec:regularity-transfer}).

The distributions of $\Zv_\sigma$ and $\Zv_\tau$ in the complex state evolution can be derived, via straightforward computations, from the corresponding distributions in the state evolution equations of the real system (see Appendix~\ref{sec:state-evolution-collapse}), with existence of the necessary terms being guaranteed by Assumption~\ref{assumption:reg-complex}.

Conceptually, the lifting of the complex-valued pseudo-Lipschitz functions to the augmented real scenario induces functions that are structured real-valued pseudo-Lipschitz functions (see Appendix~\ref{sec:pseudo-lipschitz-functions}).
Collectively, they are included in the class of functions for which \emph{state evolution} (SE) applies, per \cite[Theorem~2]{gerbelot2023amp}.
In turn, this mapping implicitly dictates the properties of the original system, after collapse.

By Proposition~\ref{prop:red-dynamics}, the orbit of the collapsed system aligns with that of the complex AMP system. Since the associated state evolution distributions are consistent under this correspondence, it follows that the asymptotic statistical properties of the complex AMP system are inherited from those of the reduced real system. A detailed proof can be found in Appendices~\ref{sec:regularity-transfer}, \ref{sec:state-evolution-collapse}, and \ref{sec:pseudo-lipschitz-functions}.

\end{IEEEproof}

When the functions $\eta^t$ in \eqref{eq:StateEstimate} satisfy the regularity conditions similar to $\pzf^t$ in Assumption~\ref{assumption:reg-complex}, we obtain the following corollary using standard transformations in the AMP literature \cite[Theorem~14]{berthier2020state}, but in the complex domain.
This characterizes the mean-squared error (MSE) in the estimates of the AMP algorithm, \eqref{eq:ResidualError}--\eqref{eq:StateEstimate}.

\begin{corollary}[Complex AMP for Vector Estimation]
\label{cor:complex-amp-vec-est}
Let $\yv \in \Cb^n, \Av \in \Cb^{n\times N}$ be as given, let $\xv_0$ be the true vector, and let $\sigma_w^2$ denote the noise variance. Starting from an initial guess $\xv^0$, the mean squared error in the estimate $\xv^t$, following \eqref{eq:ResidualError}, \eqref{eq:StateEstimate} at time $t$ satisfies
\begin{equation*}
\textstyle
\lim_{N\to \infty} \frac{1}{N} \lVert \xv^{t} - \xv_0 \rVert_2^2 = \delta(\tau_{t+1}^2-\sigma_w^2),
\end{equation*}
where $\tau_{t+1}^2 \equiv \tau^{t+1,t+1}$ denotes the diagonal entry of the state evolution array in Definition~\ref{def:state-evol-complex}, and for $\Zv \sim \cC\cN(0, \Iv_N)$, we have the recursion:
\begin{equation*}
\textstyle
\tau_{t+1}^2 = \sigma_w^2 + \frac{1}{\delta} \lim_{N\to \infty} \frac{1}{N} \Eb[\norm{\eta^t(\xv_0 + \tau_t \Zv)-\xv_0}_2^2],
\end{equation*}
where $\tau_0^2 = \sigma_w^2 + \frac{1}{\delta} \lim_{N\to \infty} \frac{1}{N} \Eb[\norm{\xv_0}_2^2]$.
\end{corollary}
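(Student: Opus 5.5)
The plan is to derive Corollary~\ref{cor:complex-amp-vec-est} from Theorem~\ref{thm:complex-amp-main} by the standard change of variables of \cite[Theorem~14]{berthier2020state}, carried out in the complex domain. Write $\xv^t = \eta^{t-1}(\rv^{t-1})$. Define the centered iterates
\[
\uv^{t} = \xv_0 - \rv^{t-1},\qquad \vv^t = \wv - \zv^t .
\]
For the denoisers, set
\[
\pzf^t(\uv) = \eta^{t-1}(\xv_0 - \uv) - \xv_0 ,
\]
and take $\pzg^t$ to be the affine map $\pzg^t(\vv) = \wv - \vv$, so that $\pzg^t(\vv^t)=\zv^t$. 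Here $\wv$ and $\xv_0$ are treated as fixed side information, conditioning on them, and $\uv^0$ is chosen so that $\pzf^0(\uv^0) = -\xv_0$, i.e.\ $\xv^0=0$.

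\textbf{Step 1: check the iteration.} Substitute into \eqref{eq:ResidualError}--\eqref{eq:StateEstimate}. Using $\yv - \Av\xv^t = \wv - \Av(\xv^t-\xv_0)$, the residual update becomes $\vv^t = \Av\pzf^t(\uv^t) - b^t\pzg^{t-1}(\vv^{t-1})$. The Onsager coefficient matches because the Wirtinger derivatives satisfy $(\pzf^t)'_\iota(\uv)=(\eta^{t-1})'_\iota(\xv_0-\uv)$ up to sign. The chain rule for $\partial/\partial \uv$ of $\uv\mapsto \xv_0-\uv$ gives $-1$, and that sign is absorbed consistently on both sides. Likewise, $\rv^t=\xv^t+\Av^\hermitian\zv^t$ gives $\uv^{t+1} = \Av^\hermitian \pzg^t(\vv^t) - c^t\pzf^t(\uv^t)$ with $c^t = \langle(\pzg^t)'\rangle = -1$. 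This matches \eqref{eq:complex-amp-1}--\eqref{eq:complex-amp-2} after the sign normalization $\uv\mapsto-\uv$ where needed. I will fix the signs carefully here, since the Onsager term in \eqref{eq:ResidualError} carries $+$ while \eqref{eq:complex-amp-2} carries $-$.

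\textbf{Step 2: regularity.} Lipschitz continuity of $\pzf^t,\pzg^t$ is inherited from $\eta^t$. The limits in Assumption~\ref{assumption:reg-complex} reduce to the assumed limits for $\eta^t$ and to empirical second moments of $\xv_0$ and $\wv$.

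\textbf{Step 3: apply the theorem and recover the recursion.} Apply Theorem~\ref{thm:complex-amp-main} with the pseudo-Lipschitz test function $\Phiv_n(\ldots,\uv^{t+1}) = \frac1N\norm{\pzf^{t+1}(\uv^{t+1})}_2^2 = \frac1N\norm{\xv^{t+1}-\xv_0}^2_2$. This gives the MSE limit
\[
\lim_N \frac1N \Eb\bigl[\norm{\eta^t(\xv_0-\Zv_\tau^{t+1})-\xv_0}^2\bigr].
\]
By rotational invariance of circular Gaussians, $-\Zv_\tau \stackrel{d}{=} \tau\Zv$.

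It then remains to identify the diagonal recursion. From \eqref{eq:tau-complex},
\[
\tau^{t+1,t+1}=\lim\frac1n\Eb\norm{\wv - \Zv^t_\sigma}^2 = \sigma_w^2 + \sigma^{t,t},
\]
where the cross term vanishes by independence of the Gaussian from $\wv$. From \eqref{eq:sigma-complex}, $\sigma^{t,t} = \frac1\delta\lim\frac1N\Eb\norm{\pzf^t(\Zv_\tau^t)}^2$. Combining these yields the stated recursion and the initialization $\tau_0^2$ from \eqref{eq:init-complex} with $\pzf^0(\uv^0)=-\xv_0$. Finally, $\frac1N\norm{\xv^{t}-\xv_0}^2$ converges to $\delta\sigma^{t,t} = \delta(\tau_{t+1}^2-\sigma_w^2)$, which is the claim.

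\textbf{Main obstacle.} The hardest step is justifying that $\pzg^t$ depends on the random vector $\wv$. Theorem~\ref{thm:complex-amp-main} assumes deterministic functions independent of $\Av$. I would first condition on $\wv$ and $\xv_0$, which are independent of $\Av$, treating them as a deterministic sequence whose empirical moments converge. I would then check that the expectations in Assumption~\ref{assumption:reg-complex} pass to the limit almost surely. The second delicate point is the sign and conjugation bookkeeping between the Wirtinger Onsager terms in the two formulations.
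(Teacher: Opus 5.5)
Your route is the paper's: the same change of variables $\uv^{t+1}=\xv_0-\rv^t$, $\vv^t=\wv-\zv^t$, $\pzf^t(\uv)=\eta^{t-1}(\xv_0-\uv)-\xv_0$, followed by Theorem~\ref{thm:complex-amp-main} and the identity $\tau^{t+1,t+1}=\sigma^{t,t}+\sigma_w^2$. Your explicit conditioning on $(\xv_0,\wv)$ and your initialization $\pzf^0(\uv^0)=-\xv_0$ are spelled out more carefully than in the paper.

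The one concrete slip is the sign you deferred. With your $\pzg^t(\vv)=\wv-\vv$ and $\uv^{t+1}=\xv_0-\rv^t$, the sign is \emph{not} absorbed consistently.
\begin{itemize}
\item Since $c^t=-1$, the right-hand side of \eqref{eq:complex-amp-1} equals $\Av^\hermitian\zv^t+\xv^t-\xv_0=-\uv^{t+1}$.
\item Since $(\pzf^t)'_\iota(\uv^t)=-(\eta^{t-1})'_\iota(\rv^{t-1})$ while $\pzg^{t-1}(\vv^{t-1})=\zv^{t-1}$, the Onsager term in \eqref{eq:complex-amp-2} comes out as $+\frac{1}{\delta}\langle(\eta^{t-1})'(\rv^{t-1})\rangle\zv^{t-1}$ instead of carrying a minus sign.
\end{itemize}
The paper avoids this by taking $\pzg^t(\vv)=\vv-\wv$. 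Then $\pzg^t(\vv^t)=-\zv^t$, $c^t=1$ and $b^t=-\frac{1}{\delta}\langle(\eta^{t-1})'(\rv^{t-1})\rangle$, and both updates hold verbatim. Your proposed global flip also works: take $\uv^{t+1}=\rv^t-\xv_0$ and $\pzf^t(\uv)=\eta^{t-1}(\xv_0+\uv)-\xv_0$, keeping $\pzg^t(\vv)=\wv-\vv$. The state evolution is unchanged because $\Zv_\tau^t$ is circularly symmetric.

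One further caution, which the paper's proof shares. What the computation actually yields is $\tau^{t+1,t+1}=\sigma_w^2+\frac{1}{\delta}\lim_N\frac{1}{N}\Eb\norm{\eta^{t-1}(\xv_0+\sqrt{\tau^{t,t}}\Zv)-\xv_0}_2^2$, with $\tau^{1,1}=\sigma_w^2+\sigma^{0,0}$. The stated recursion uses $\eta^t$ and labels the initial value $\tau_0^2$, so it is recovered only after shifting the index by one. That shift does not sit consistently with $\tau_{t+1}^2\equiv\tau^{t+1,t+1}$ in the MSE formula. Your phrase ``yields the stated recursion'' should therefore be read modulo that relabeling.
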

\begin{IEEEproof}
The proof consists in constructing the functions $\pzf^t, \pzg^t$ explicitly in terms of $\eta^t$ and the associated vectors; details are given in Appendix~\ref{sec:amp-algo-corollary}.
\end{IEEEproof}

In summary, focusing on complex AMP with non-separable functions, we propose a method to construct a richer real-valued system with desirable properties.
This augmented system is well suited for applying existing AMP results~\cite[Theorem~2]{gerbelot2023amp}.
Still, this higher-dimensional variant is not equivalent to the original complex AMP setting, as the resulting space inherently contains more information.
To reconcile this, we introduce a many-to-one transformation that maps the augmented system to the original complex formulation while preserving its key dynamics.
Finally, we show that the statistical properties of the induced system coincide with those of the original, thereby establishing state evolution in the complex setting.

\subsection{Matrix Extension}
\label{sec:matrix-extension}

The canonical transformation framework extends from complex vectors to complex matrices.
In the matrix setting, $p$ feature vectors are estimated simultaneously from the observation model $\Yv = \Av \Xv + \Nv$, where $\Xv \in \Cb^{N\times p}$ and $\Yv \in \Cb^{n\times p}$.
The measurement matrix $\Av \in \Cb^{n\times N}$ has i.i.d. entries $\Av_{\iota,j} \sim \cC\cN(0, 1/n)$, and the noise matrix $\Nv \in \Cb^{n\times p}$ has i.i.d. entries $\Nv_{\iota,j} \sim \cC\cN(0, \sigma_w^2)$.
The Onsager corrections generalize from scalars to $p \times p$ Jacobian matrices, and the full derivation is deferred to Appendix~\ref{sec:mat-amp-reduction}.

Generalizing the collapse operator $\Rv = \Iv_n \otimes \rho^\transpose$ from Section~\ref{sec:canonical-transform}, define the matrix collapse operators
\begin{equation}
    \Rv_R = \Iv_n \otimes \rho^\transpose \in \Cb^{n\times 2n}, \qquad \Rv_L = \Iv_p \otimes \rho \in \Cb^{2p \times p}.
\end{equation}
Let $\pzf^k: \Cb^{N\times p} \to \Cb^{N\times p}$ and $\pzg^k: \Cb^{n\times p} \to \Cb^{n \times p}$ be sequences of Lipschitz functions.
The complex matrix AMP iteration is defined by
\begin{xalignat}{2}
\Uv^{k+1} \! &= \! \Av^{\hermitian} \Gv^k \! - \Fv^k (\Kv^k)^{\transpose} \!
& \Fv^{k+1} \! &= \! \pzf^{k+1} (\Uv^{k+1}) \label{eq:complex-mat-amp-U} \\
\Vv^{k} &= \Av \Fv^{k} - \Gv^{k-1} (\Lv^{k})^{\transpose} &
\Gv^k &= \! \pzg^k(\Vv^{k}) \label{eq:complex-mat-amp-V}
\end{xalignat}
with initialization $\Fv^0 = \pzf^0(\Uv^0)$ for a given $\Uv^0 \in \Cb^{N \times p}$ and $\Gv^{-1} = \mathbf{0}$, where the Jacobian matrices
\begin{align}
    (\Kv^k)^\transpose &= \frac{1}{n} \sum_{d=1}^{n} \frac{\partial \pzg^k_{d,:}(\Vv^k)}{\partial \Vv_{d,:}^\transpose} \in \Cb^{p\times p}, \\
    (\Lv^{k})^\transpose &= \frac{1}{n} \sum_{d=1}^{N} \frac{\partial \pzf^k_{d,:}(\Uv^k)}{\partial \Uv_{d,:}^\transpose} \in \Cb^{p\times p}
\end{align}
have $(\iota,j)$th entry $\left(\frac{\partial \pzg_{d,:}^k(\Vv^k)}{\partial \Vv_{d,:}^\transpose}\right)_{\iota,j} = \frac{\partial \pzg_{d,\iota}^k(\Vv^k)}{\partial \Vv_{d,j}}$, which is the Wirtinger derivative of $\pzg_{d,\iota}^k(\Vv^k)$ with respect to $\Vv_{d,j}^k$.

\begin{proposition}[Dynamics of Collapsed Matrix System]
\label{prop:mat-red-dynamics}
Under the matrix canonical transformation defined in Appendix~\ref{sec:mat-amp-denoisers}, the real matrix AMP equations~\eqref{eq:real-mat-1}--\eqref{eq:real-mat-2} imply the complex matrix AMP equations~\eqref{eq:complex-mat-amp-U}--\eqref{eq:complex-mat-amp-V}.
\end{proposition}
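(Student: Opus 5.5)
The plan is to mirror the proof of Proposition~\ref{prop:red-dynamics}, lifting every identity of Lemma~\ref{lem:prop} from $\rho$ to the block operators $\Rv_R$ and $\Rv_L$. The matrix canonical transformation of Appendix~\ref{sec:mat-amp-denoisers} is not reproduced above, so I take it to be the natural generalization. The real iterates are $\Hv^k\in\Rb^{N\times 2p}$ and $\Ev^k\in\Rb^{2n\times 2p}$, with collapses $\Uv^k=\Hv^k\Rv_L$ and $\Vv^k=\Rv_R\Ev^k\Rv_L$. The functions are $\Mv^k=f^k(\Hv^k)$, whose columns interleave $\Re\pzf^k(\Uv^k)$ and $\Im\pzf^k(\Uv^k)$, and $\Qv^k=g^k(\Ev^k)$, made of $2\times 2$ blocks $\begin{bsmallmatrix} a & b\\ b& -a\end{bsmallmatrix}$ with $a+\pzi b=\pzg^k_{d,j}(\Vv^k)$ for row $d$ and column $j$.

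\textbf{Step 1: block identities.} Applying Lemma~\ref{lem:prop} entrywise gives the matrix analogues:
\begin{itemize}
\item $\Rv_R\Wv=\Av$ and $\Mv^k\Rv_L=\pzf^k(\Uv^k)$;
\item $\Qv^k\Rv_L=\Rv_R^\hermitian\pzg^k(\Vv^k)$, since each $2\times2$ block times $\rho$ equals $\pzg_{d,j}\overline\rho$ by Lemma~\ref{lem:prop}\ref{lem:qri};
\item $\Rv_R\Qv^k=\pzg^k(\Vv^k)\Rv_L^\hermitian$.
\end{itemize}
For the derivatives, Lemma~\ref{lem:prop}\ref{lem:hu} applied to each pair $(\Hv_{d,2j-1},\Hv_{d,2j})$ gives $\tfrac12\Rv_L^\hermitian\,\partial/\partial\Hv_{d,:}^\transpose=\partial/\partial\Uv_{d,:}^\transpose$. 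Lemma~\ref{lem:prop}\ref{lem:ev} applied to each $2\times2$ subblock of $\Ev_d$ gives the analogous statement for $\Vv_{d,:}$. Both use that $g^k$ depends on $\Ev^k$ only through $\Vv^k$, and $f^k$ on $\Hv^k$ only through $\Uv^k$, so the chain rule reduces to Wirtinger derivatives.

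\textbf{Step 2: the $\Uv$ equation.} Right-multiply~\eqref{eq:real-mat-1} by $\Rv_L$. The first term becomes $\Wv^\transpose\Rv_R^\hermitian\pzg^k(\Vv^k)=\Av^\hermitian\Gv^k$, because $\Wv$ is real. For the Onsager term, I will show the intertwining relation $(\Cv^k)^\transpose\Rv_L=\Rv_L(\Kv^k)^\transpose$, which gives $\Mv^k(\Cv^k)^\transpose\Rv_L=\Fv^k(\Kv^k)^\transpose$.

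\textbf{Step 3: the $\Vv$ equation.} Left-multiply~\eqref{eq:real-mat-2} by $\Rv_R$ and right-multiply by $\Rv_L$. The first term gives $\Av\Fv^k$. The second gives $\pzg^{k-1}(\Vv^{k-1})\Rv_L^\hermitian(\Bv^k)^\transpose\Rv_L$, so it suffices to show $\Rv_L^\hermitian(\Bv^k)^\transpose\Rv_L=(\Lv^k)^\transpose$. The normalization must also match: the real Jacobians average over $m=2n$ rows, and this factor of $2$ should cancel against $\rho^\hermitian\rho=2$, exactly as the $\tfrac12$ did in the vector case.

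\textbf{Main obstacle.} The hard step is the two Jacobian intertwining identities, in particular for $\Cv^k$. It is a $2p\times 2p$ real matrix whose $(j,l)$ subblocks are built from derivatives of the structured blocks $\Qv_{d}$ with respect to entries of $\Ev_d$. I would compute one subblock with Lemma~\ref{lem:prop}\ref{lem:qri} and~\ref{lem:ev}, which shows that subblock $(j,l)$ maps $\rho$ to $\rho\,\partial\pzg_{d,j}/\partial\Vv_{d,l}$ (up to the transpose convention). Assembling the subblocks then gives the identity. The care needed is in index ordering: whether real and imaginary parts are interleaved, which is consistent with $\Rv_L=\Iv_p\otimes\rho$, and where the transposes fall. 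If the appendix uses a different block ordering, I would insert the corresponding permutation into $\Rv_L$.
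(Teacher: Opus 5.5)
\textbf{Gap: the deferred transpose bookkeeping fails.} Your route is the paper's proof almost line for line. It uses the block identities $\Qv\Rv_L=\Rv_R^\hermitian\Gv$ and $\Rv_R\Qv=\Gv\Rv_L^\hermitian$, then the two Jacobian intertwinings from Lemma~\ref{lem:prop}\ref{lem:hu} and~\ref{lem:ev}, with $m=2n$ cancelling $\rho^\hermitian\rho=2$. The problem is the step you left ``up to the transpose convention''. It is not cosmetic: with $\Kv^k,\Lv^k$ as defined in Section~\ref{sec:matrix-extension}, the identities you set out to prove are false in general. Since $\Cv_{a,b}=\frac1m\sum_{\du}\partial g_{\du,a}/\partial\Ev_{\du,b}$, the $(\iota,j)$ sub-block of $\Cv^\transpose$ differentiates with respect to column-block $\iota$ of $\Ev$ and acts on column-block $j$ of $\Qv$:
\[
(\Cv^\transpose)^{\iota,j}=\frac{1}{2n}\sum_{d=1}^{n} D^{d,\iota}\,\Qv^{d,j},
\qquad
D^{d,\iota}=\begin{bmatrix}\partial/\partial\Ev^{d,\iota}_{1,1} & \partial/\partial\Ev^{d,\iota}_{2,1}\\ \partial/\partial\Ev^{d,\iota}_{1,2} & \partial/\partial\Ev^{d,\iota}_{2,2}\end{bmatrix}.
\]
Lemma~\ref{lem:prop}\ref{lem:qri} and~\ref{lem:ev} then give $(\Cv^\transpose)^{\iota,j}\rho=\rho\cdot\frac1n\sum_d\partial\pzg_{d,j}/\partial\Vv_{d,\iota}=\rho\,\Kv_{\iota,j}$, because the paper defines $(\Kv^\transpose)_{\iota,j}=\frac1n\sum_d\partial\pzg_{d,\iota}/\partial\Vv_{d,j}$. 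Hence $\Cv^\transpose\Rv_L=\Rv_L\Kv$, not $\Rv_L\Kv^\transpose$. The same bookkeeping gives $\Rv_L^\hermitian\Bv^\transpose\Rv_L=\Lv$. Your claim that sub-block $(j,l)$ maps $\rho$ to $\rho\,\partial\pzg_{d,j}/\partial\Vv_{d,l}$ has exactly these indices interchanged. Your fallback of reordering inside $\Rv_L$ cannot repair it, because the discrepancy is a transpose of a $p\times p$ matrix, not a permutation of real/imaginary slots.

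\textbf{Consequence and fix.} The real recursion therefore collapses to $\Uv^{k+1}=\Av^\hermitian\Gv^k-\Fv^k\Kv^k$ and $\Vv^k=\Av\Fv^k-\Gv^{k-1}\Lv^k$. This matches the stated equations only when $\Kv^k$ and $\Lv^k$ are complex symmetric, for example when each $\pzg_{d,\iota}$ depends on row $d$ only through $\Vv_{d,\iota}$. As a counterexample, take $p=2$, $\pzg_{d,1}(\Vv)=\Vv_{d,2}$ and $\pzg_{d,2}\equiv 0$. The collapsed real Onsager term subtracts $\Fv_{:,2}$ from column $1$ of $\Uv$, which is what is needed since $\Av^\hermitian\Gv_{:,1}=\Av^\hermitian\Av\Fv_{:,2}+\ldots$. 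By contrast, $\Fv\Kv^\transpose$ subtracts $\Fv_{:,1}$ from column $2$.

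The paper's own proof reaches $\Kv^\transpose$ and $\Lv^\transpose$ through the same slip. It writes the formula for $\Cv_{\iu,\ju}$ (resp.\ $\Bv_{\iu,\ju}$) but calls it $\Cv^\transpose_{\iu,\ju}$. It therefore labels the block $D^{d,j}\Qv^{d,\iota}$ as the $(\iota,j)$ block of $\Cv^\transpose$ when it is the $(j,\iota)$ block.

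To complete the argument, do one of the following:
\begin{itemize}
\item assume $\Kv^k$ and $\Lv^k$ are symmetric; or
\item redefine them without the transpose, i.e.\ $\Kv^k=\frac1n\sum_d\partial\pzg^k_{d,:}(\Vv^k)/\partial\Vv_{d,:}^\transpose$ and likewise for $\Lv^k$. This matches the real convention, in which $\Cv^k$ is the Jacobian and the correction is $\Mv^k(\Cv^k)^\transpose$.
\end{itemize}
With the second option, your Steps~2--3 go through verbatim.
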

\begin{IEEEproof}
See Appendix~\ref{sec:mat-amp-reduction}.
\end{IEEEproof}

\begin{remark}[Extension of Regularity Conditions]
The regularity conditions of Assumption~\ref{assumption:reg-complex} extend to the matrix setting by requiring the matrix-valued functions $\pzf^k: \Cb^{N \times p} \to \Cb^{N \times p}$ and $\pzg^k: \Cb^{n \times p} \to \Cb^{n \times p}$ to satisfy uniform Lipschitz conditions with respect to the Frobenius norm.
The state evolution arrays become $p \times p$ complex matrices, and the reduction argument of Theorem~\ref{thm:complex-amp-main} carries over with parameters $m = 2n$, $M = N$, and block size~$2p$.
\end{remark}

\section{Complex Sparse Group LASSO}
\label{sec:complex_sparse_group_lasso}

We apply the developed framework to the complex sparse group LASSO, a formulation whose real-valued counterpart is well established in the statistics and signal processing literature (cf.\ Section~\ref{sec:motivation}).
In the complex domain, this problem arises naturally in wireless systems.
A primary example is OTFS preamble detection, where the set of active users is sparse and each user's channel admits a sparse representation in the delay-Doppler domain.
Importantly, the SGL proximal operator is inherently non-separable: the group-level shrinkage depends on the joint norm of all entries that survive element-wise thresholding within each group.
The non-separable state evolution established in Section~\ref{sec:main-result} is therefore not merely applicable but required for a rigorous asymptotic characterization.

We now describe the signal model and recovery formulation precisely.
The goal is to recover a structured signal $\xv \in \Cb^N$ from noisy linear measurements $\yv = \Av \xv + \wv$, where $\Av \in \Cb^{n \times N}$ has i.i.d. entries $\Av_{i,j} \sim \cC\cN(0, 1/n)$ and $\wv \sim \cC\cN(\zerov, \sigma_w^2 \Iv_n)$.
Recall that $\delta = n/N$.
We assume that the signal $\xv$ exhibits \emph{structured} group sparsity.
Specifically, there are $G$ groups, each of size $S = N / G$.
Let $k$ denote the total number of active elements, so that the sparsity ratio is $\varrho = k/n$, and let $r_{\mathrm{ga}}$ denote the fraction of active groups.
The fraction of active elements is $r_{\mathrm{a}} = k/N = \varrho\delta$, and the fraction of active elements per active group is $r_{\mathrm{ea}} = r_{\mathrm{a}} / r_{\mathrm{ga}}$.
The constraint $r_{\mathrm{ea}} \le 1$ requires $r_{\mathrm{ga}} \ge \varrho\delta$.
In our experiments, we pick $r_{\mathrm{ga}}$ fraction of $G$ groups at random to be active, and within an active group, we pick $r_{\mathrm{ea}}$ fraction of the elements at random to be active.
Active elements are drawn uniformly on the complex unit circle, i.e., $e^{\pzi\theta}$ with $\theta \sim \mathrm{Unif}[0, 2\pi)$. Additionally, we assume that the group structure is perfectly known, i.e., we know the partition of the set of indices into groups.

Owing to this double sparsity structure, a natural formulation for the recovery problem is the complex sparse group LASSO,
\begin{align}
    \label{eq:complex-sparse-group-lasso}
    \min_{\xv} ~ \norm{\yv-\Av\xv}_2^2 + \lambda_{\mathrm{g}} \sum_{g=1}^G \norm{\xv_{g,:}}_2 + \lambda_{\mathrm{e}} \norm{\xv}_1.
\end{align}
Since this optimization problem is convex but non-smooth, it can be solved via proximal gradient descent, e.g., iterative soft thresholding algorithm (ISTA).
The associated proximal operator is
\begin{equation}
    \begin{split}
    \eta(\rv, \lambda_{\mathrm{g}}, \lambda_{\mathrm{e}}) &= \operatorname*{arg\,min}_{\tilde{\xv}} \Big\{ \norm{\rv - \tilde{\xv}}_2^2 \\
    & \hspace{10pt} + \lambda_{\mathrm{g}} \sum_{g=1}^G \norm{\tilde{\xv}_{g,:}}_2 + \lambda_{\mathrm{e}} \norm{\tilde{\xv}}_1 \Big\}.
\end{split}
\end{equation}
The $j$th component of this operator admits the closed form given by
\begin{align}
    \eta_j(\rv, \lambda_{\mathrm{g}}, \lambda_{\mathrm{e}}) =
         \left( 1- \frac{\lambda_{\mathrm{g}}}{\norm{\eta^{\mathrm{st}}(\rv_{\cG_j}, \lambda_{\mathrm{e}})}_2}  \right)_+ \eta^{\mathrm{st}}(r_j, \lambda_{\mathrm{e}}),
\end{align}
where $(\cdot)_+ \triangleq\max \{0, \cdot\}$ and $\cG_j \subset [N]$ denotes the index set of the group containing element $j$.
For any positive integer $m$, the function $\eta^{\mathrm{st}}: \Cb^m \times \Rb_+ \to \Cb^m$ acts component-wise as complex soft-thresholding, with $j$th entry
\begin{align}
    \eta^{\mathrm{st}}_j(\rv, \lambda_{\mathrm{e}}) = \left(1-\frac{\lambda_{\mathrm{e}}}{\abs{r_j}}\right)_+ r_j.
\end{align}
Substituting the definition of $\eta^{\mathrm{st}}$ and simplifying yields
\begin{equation}
    \eta_j(\rv, \lambda_{\mathrm{g}}, \lambda_{\mathrm{e}}) = \left(1-\tfrac{\lambda_{\mathrm{g}}}{\sqrt{\sum_{j'\in \cG_j} \left(\abs{r_{j'}}-\lambda_{\mathrm{e}}\right)_+^2}} \right)_+ \left( 1 - \tfrac{\lambda_{\mathrm{e}}}{\abs{r_j}} \right)_+ r_j.
\end{equation}
That is, elements are first soft-thresholded individually by $\lambda_{\mathrm{e}}$, then the entire group is zeroed out if the norm of the surviving entries falls below $\lambda_{\mathrm{g}}$.

In the real-valued setting, AMP is known to converge faster than ISTA and to achieve sharper recovery thresholds for LASSO variants~\cite{donoho2009message, bayati2011dynamics, chen2021asymptotic}.
To extend this advantage to the complex case, we apply the Onsager correction from Section~\ref{sec:main-result}, which requires the Wirtinger derivative of the denoiser $\eta$.
At points where $|r_j| > \lambda_{\mathrm{e}}$ and the group containing $j$ is active, the $(\cdot)_+$ operators can be dropped.
Let $\cG_j^* = \{j' \in \cG_j : |r_{j'}| > \lambda_{\mathrm{e}}\}$ denote the active elements in the group of $j$.
Then,
\begin{equation*}
    \eta_j(\rv, \lambda_{\mathrm{g}}, \lambda_{\mathrm{e}}) = \left(1-\tfrac{\lambda_{\mathrm{g}}}{\sqrt{\sum_{j'\in \cG_j^*} \left(\abs{r_{j'}}-\lambda_{\mathrm{e}}\right)^2}} \right) \left( 1 - \tfrac{\lambda_{\mathrm{e}}}{\abs{r_j}} \right) r_j.
\end{equation*}
Since $\eta_j$ commutes with phase rotations, i.e., $\eta_j(e^{\pzi\theta}\rv) = e^{\pzi\theta}\eta_j(\rv)$, the Onsager correction blocks inherit the $\begin{bsmallmatrix} a & b \\ b & -a \end{bsmallmatrix}$ structure required by Definition~\ref{def:canonical-transformation}.
The Wirtinger derivative simplifies to
\begin{align*}
    (\eta)'_j = \frac{\partial \eta_j}{\partial r_j}
    &= \frac{1}{2} \left(\frac{\partial \Re(\eta_j)}{\partial \Re(r_j)} + \frac{\partial \Im(\eta_j)}{\partial \Im(r_j)}\right),
\end{align*}
and $(\eta)'_j = 0$ when $j$ is inactive or belongs to an inactive group.
(The derivative is undefined on the measure-zero set where $|r_j| = \lambda_{\mathrm{e}}$ or $\norm{\eta^{\mathrm{st}}(\rv_g, \lambda_{\mathrm{e}})}_2 = \lambda_{\mathrm{g}}$; these boundary points, however, do not affect the empirical average $\langle \eta' \rangle$.)
Write $r_{g,k}$ for the $k$th element of group $g$, and let $\cS_g = \{k : |r_{g,k}| > \lambda_{\mathrm{e}}\}$ be the active indices within group $g$ (equivalently, $\cS_g$ corresponds to $\cG_j^*$ for any $j$ belonging to group $g$).
Let $\cS_a = \{g : \|\eta^{\mathrm{st}}(\rv_g, \lambda_{\mathrm{e}})\|_2 > \lambda_{\mathrm{g}}\}$ denote the set of active groups.
A direct computation gives
\begin{align*}
    &\langle \eta'(\rv, \lambda_{\mathrm{g}}, \lambda_{\mathrm{e}}) \rangle
    = \frac{1}{N} \sum_{g\in \cS_a} \Bigg( \abs{\cS_g} -\sum_{k\in \cS_g} \frac{\lambda_{\mathrm{e}}}{2\abs{r_{g,k}}} \\
    & \hspace{100pt} - \frac{\lambda_{\mathrm{g}}}{2} \frac{2\abs{\cS_g} - 1 - \sum_{k\in \cS_g} \frac{\lambda_{\mathrm{e}}}{\abs{r_{g,k}}}}{\sqrt{\sum_{k\in \cS_g} \left(\abs{r_{g,k}}-\lambda_{\mathrm{e}}\right)^2}} \Bigg) \\
    &= \frac{1}{N} \sum_{g\in \cS_a} \Bigg( \sum_{k\in \cS_g} \left( 1-\frac{\lambda_{\mathrm{e}}}{2\abs{r_{g,k}}} \right) \\
    & \hspace{40pt} - \lambda_{\mathrm{g}} \left( \frac{ \sum_{k\in \cS_g} \left(1-\frac{\lambda_{\mathrm{e}}}{2\abs{r_{g,k}}} \right)-\frac{1}{2}}{\sqrt{\sum_{k\in \cS_g} \left(\abs{r_{g,k}}-\lambda_{\mathrm{e}}\right)^2}} \right) \Bigg) \\
    &= \frac{1}{N} \sum_{g\in \cS_a} \Bigg( \frac{\lambda_{\mathrm{g}}}{2\norm{\eta^{\mathrm{st}}(\rv_g, \lambda_{\mathrm{e}})}_2} \\
    & \hspace{40pt} + \left(1-\frac{\lambda_{\mathrm{g}}}{\norm{\eta^{\mathrm{st}}(\rv_g, \lambda_{\mathrm{e}})}_2}\right) \left(\sum_{j\in \cS_g} \left(1-\frac{\lambda_{\mathrm{e}}}{2\abs{r_{g,j}}}\right) \right)  \Bigg) .
\end{align*}
Moreover, the SGL denoiser satisfies the regularity conditions of Assumption~\ref{assumption:reg-complex}.
Indeed, since complex soft-thresholding is $1$-Lipschitz and the group-norm shrinkage operator is non-expansive, their composition is Lipschitz continuous.
The required moment limits (A4)--(A6) follow from the finite second-moment assumption on~$\xv_0$.

The expressions above, together with the state evolution of Corollary~\ref{cor:complex-amp-vec-est}, provide all the ingredients needed to predict the asymptotic performance of complex AMP with the SGL denoiser.
We validate these predictions numerically in the next section.

\section{Numerical Experiments}
\label{sec:experiments}

We present numerical experiments on the complex sparse group LASSO problem formulated in Section~\ref{sec:complex_sparse_group_lasso}.
The goals are twofold: (i)~empirically validate the state evolution predictions of Corollary~\ref{cor:complex-amp-vec-est}, and (ii)~demonstrate the performance gains afforded by the proposed non-separable complex denoiser that jointly exploits group sparsity, intra-group sparsity, and the complex structure of the signal.

\subsection{Experimental Setting}

We adopt the signal model of Section~\ref{sec:complex_sparse_group_lasso} with $N = 1000$ signal elements partitioned into $G = 10$ groups, each of size $S = 100$.
The measurement matrix $\Av \in \Cb^{n \times N}$ has i.i.d. entries $\Av_{i,j} \sim \cC\cN(0,1/n)$ with $n = \delta N$, and the noise vector is $\wv \sim \cC\cN(\zerov, \sigma_w^2 \Iv_n)$.
We set the undersampling ratio $\delta = 0.6$ (so $n = 600$), the sparsity ratio $\varrho = k/n = 0.5$, and the fraction of active groups $r_{\mathrm{ga}} = 0.6$.
Therefore, the total number of active elements $k = \rho n = 0.5 \times 600 = 300$ divided among the $r_{\rm ga} G = 6$ active groups, with each active group having $k/r_{\rm ga} G = 300/6 =50$ active elements
(which gives $r_{\rm ea} = 50/(N/G) = 50/100 = 0.5$).
Each of these active elements is drawn uniformly on the complex unit circle.
All reported results are averaged over $200$ independent realizations of $(\Av, \xv_0, \wv)$.
While the state evolution results of Corollary~\ref{cor:complex-amp-vec-est} are asymptotic ($N \to \infty$), the close agreement observed at $N = 1000$ is consistent with known finite-size behavior of AMP algorithms~\cite{bayati2011dynamics}.

We measure reconstruction quality by the \emph{normalized mean-squared error},
\begin{align}
    \label{eq:nmse}
    \mathrm{NMSE} \defeq \frac{\norm{\xv^t - \xv_0}_2^2}{\norm{\xv_0}_2^2} .
\end{align}
By Corollary~\ref{cor:complex-amp-vec-est}, the asymptotic MSE satisfies $(1/N)\norm{\xv^t - \xv_0}_2^2 \to \delta(\tau_{t+1}^2 - \sigma_w^2)$, where $\tau_t$ is determined by the state evolution recursion.
In a finite-dimensional implementation, $\tau_t$ is not directly available; however, because the residual $\zv^t$ behaves as $\cC\cN(\zerov, \tau_t^2 \Iv_n)$ under the SE model, the empirical estimate $\tau_t \approx \norm{\zv^t}_2 / \sqrt{n}$ is asymptotically consistent~\cite{donoho2009message}.
We use this estimate to set the denoiser thresholds at each iteration as
\begin{xalignat}{2}
    \label{eq:threshold-scaling}
    \lambda_{\mathrm{g}} &= \alpha_g \, \tau_t, &
    \lambda_{\mathrm{e}} & = \alpha_e \, \tau_t,
\end{xalignat}
with $\alpha_g = 0.2$ and $\alpha_e = 0.8$ throughout.
In certain configurations, the sparse group LASSO (SGL) denoiser is applied only every $\kappa$ iterations, with a simpler component-wise (LASSO) denoiser used at intermediate steps.
This periodic application promotes convergence by allowing the effective observation $\rv^t$ to be sufficiently refined before the non-separable group-level shrinkage is applied.
The value of $\kappa$ is specified per experiment below.

\subsection{Convergence and State Evolution Validation}

We first examine the convergence behavior of complex AMP versus complex ISTA, both employing the non-separable SGL denoiser of Section~\ref{sec:complex_sparse_group_lasso}.
The SNR is fixed at $30$\,dB, the denoiser is applied at every iteration ($\kappa = 1$), and both algorithms are run for $200$ iterations.
\begin{figure}[t]
    \centering
    \begin{tikzpicture}
\begin{axis}[
    width=7cm,
    height=5.5cm,
    font=\small,
    xmin=0, xmax=100,
    xlabel={Number of Iterations},
    ylabel={NMSE},
    title={Convergence of Complex AMP and ISTA},
    grid=both,
    minor grid style={opacity=0.25},
    major grid style={opacity=0.5},
    legend style={
        at={(0.97,0.97)},
        anchor=north east,
        rounded corners=4pt,
        font=\footnotesize,
        cells={anchor=west}
    }
]

\addplot[color=teal, line width=1.5pt]
    table[x=iteration, y=nmse_ista] {Figures/nmse_vs_iter.dat};
\addlegendentry{Complex ISTA}

\addplot[color=purple, line width=1.5pt]
    table[x=iteration, y=nmse_amp] {Figures/nmse_vs_iter.dat};
\addlegendentry{Complex AMP}

\addplot[color=gray, line width=1.5pt, dashed]
    table[x=iteration, y=se_amp] {Figures/nmse_vs_iter.dat};
\addlegendentry{State Evolution}

\end{axis}
\end{tikzpicture}
    \caption{Convergence of complex AMP and complex ISTA with the non-separable SGL denoiser.
    The dashed curve shows the state evolution (SE) prediction from Corollary~\ref{cor:complex-amp-vec-est}.
    Parameters: $\delta = 0.6$, $\varrho = 0.5$, $r_{\mathrm{ga}} = 0.6$, $\mathrm{SNR} = 30$\,dB, $\alpha_g = 0.2$, $\alpha_e = 0.8$.}
    \label{fig:nmse_vs_iter}
\end{figure}
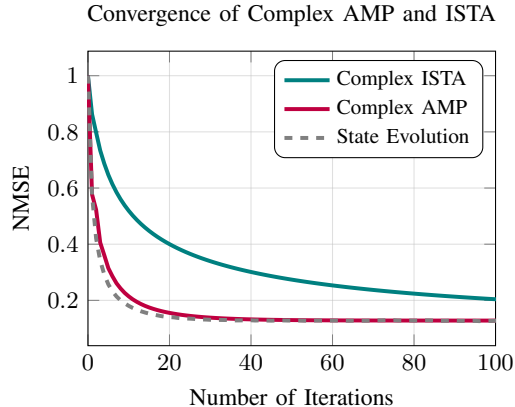

Fig.~\ref{fig:nmse_vs_iter} shows the NMSE as a function of iteration count.
Complex AMP converges rapidly, within approximately $15$ iterations, and its trajectory is accurately predicted by the state evolution, validating the theoretical framework of Corollary~\ref{cor:complex-amp-vec-est}.
In this regime, complex ISTA appears to approach the same fixed point but at a significantly slower rate.
This is consistent with the well-known advantage of AMP over ISTA in the real-valued setting~\cite{donoho2009message, bayati2011dynamics}.
Unlike AMP, ISTA lacks the Onsager correction and therefore does not admit a state evolution characterization.

\subsection{Performance Comparison}

We next compare the recovery performance of four AMP variants as a function of SNR, ranging from $10$ to $40$\,dB.
The four methods differ in how they exploit the signal structure through their choice of denoiser:
\begin{itemize}
    \item \emph{Complex LASSO}: element-wise soft thresholding with $\lambda_{\mathrm{e}} = \alpha_e \, \tau_t$ and $\lambda_{\mathrm{g}} = 0$, which treats all entries as independently sparse and ignores the group structure;
    \item \emph{Complex GL}: group thresholding with $\lambda_{\mathrm{g}} = \alpha_g \, \tau_t$ and $\lambda_{\mathrm{e}} = 0$, which exploits only group-level sparsity;
    \item \emph{Real SGL}: the SGL denoiser applied to the real and imaginary parts independently, thereby capturing both sparsity levels but treating the complex components as separate real-valued problems;
    \item \emph{Complex SGL}: the proposed non-separable complex SGL denoiser from Section~\ref{sec:complex_sparse_group_lasso}, which jointly accounts for group sparsity, intra-group sparsity, and the complex nature of the signal.
\end{itemize}
For the Real and Complex SGL variants, the SGL denoiser is applied every $\kappa = 5$ iterations to promote convergence stability, with the simpler component-wise denoiser used at intermediate steps.

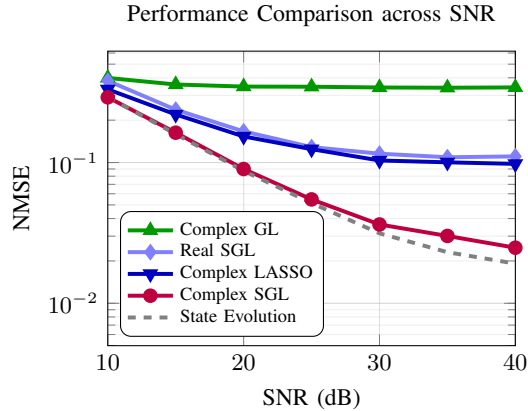
\begin{figure}[t]
    \centering
    \begin{tikzpicture}
\begin{semilogyaxis}[
    width=7cm,
    height=5.5cm,
    font=\small,
    xlabel={SNR (dB)},
    xmin=10, xmax=40,
    ylabel={NMSE},
    ymin=0.005,
    title={Performance Comparison across SNR},
    grid=both,
    minor grid style={opacity=0.25},
    major grid style={opacity=0.5},
    legend style={
        at={(0.03,0.03)},
        anchor=south west,
        rounded corners=4pt,
        font=\scriptsize,
        cells={anchor=west},
        inner sep=2pt,
        row sep=-2pt
    }
]

\addplot[color=green!60!black, line width=1.5pt, mark=triangle*]
    table[x=snr_db, y=nmse_group_lasso] {Figures/nmse_vs_snr.dat};
\addlegendentry{Complex GL}

\addplot[color=blue!50, line width=1.5pt, mark=diamond*]
    table[x=snr_db, y=nmse_real_sgl] {Figures/nmse_vs_snr.dat};
\addlegendentry{Real SGL}

\addplot[color=blue!75!black, line width=1.5pt, mark=triangle*, mark options={rotate=180}]
    table[x=snr_db, y=nmse_lasso] {Figures/nmse_vs_snr.dat};
\addlegendentry{Complex LASSO}

\addplot[color=purple, line width=1.5pt, mark=*]
    table[x=snr_db, y=nmse_sgl, y error=std_sgl] {Figures/nmse_vs_snr.dat};
\addlegendentry{Complex SGL}

\addplot[color=gray, line width=1.5pt, dashed]
    table[x=snr_db, y=se_sgl] {Figures/nmse_vs_snr.dat};
\addlegendentry{State Evolution}

\end{semilogyaxis}
\end{tikzpicture}
    \caption{NMSE versus SNR for four AMP variants.
    The dashed curve shows the SE prediction for Complex SGL.
    Parameters: $\delta = 0.6$, $\varrho = 0.5$, $r_{\mathrm{ga}} = 0.6$, $\alpha_g = 0.2$, $\alpha_e = 0.8$, $\kappa = 5$.}
    \label{fig:nmse_vs_snr}
\end{figure}

Fig.~\ref{fig:nmse_vs_snr} presents the NMSE as a function of SNR for the four methods.
Complex GL performs worst across all SNR values; by exploiting only the group-level sparsity, it leaves the intra-group sparsity structure entirely unexploited.
Complex LASSO improves upon GL by penalizing individual entries, but it ignores the group structure and therefore yields suboptimal performance.
Real SGL jointly promotes both levels of sparsity; however, by operating on the real and imaginary components independently, it fails to exploit the coupling between them that is inherent to complex-valued signals.
In contrast, Complex SGL fully accounts for all three structural properties and achieves the best NMSE across the entire SNR range.
The state evolution prediction closely tracks the empirical NMSE of Complex SGL, providing further validation of the theoretical framework developed in Section~\ref{sec:main-result}.

\section{Conclusion}

This article proved state evolution for complex-valued AMP with non-separable denoising functions.
The approach constructs an augmented real matrix-valued system that lifts the representation from $2n$ to $4n$ real dimensions, then recovers the complex domain through a many-to-one canonical transformation.
This strategy leverages existing real-valued matrix AMP results~\cite{berthier2020state,gerbelot2023amp} and yields a direct reduction from the $2 \times 2$ matrix-valued state evolution of the augmented system to scalar complex state evolution recursions.
The framework further extends to the complex matrix-valued setting, where multiple feature vectors are estimated simultaneously and the Onsager corrections generalize from complex scalars to $p \times p$ complex Jacobian matrices.

A key insight is that Wirtinger derivatives arise in the Onsager correction as a direct byproduct of the canonical transformation, unifying the treatment of complex derivatives within the AMP framework rather than requiring them to be imposed a priori.
The transformation and its structural properties were shown to preserve the AMP dynamics, enabling a complete state evolution characterization (Theorem~\ref{thm:complex-amp-main}).
To demonstrate the practical reach of the theory, it was applied to the complex SGL, for which a closed-form proximal operator and its Wirtinger derivative were derived; this denoiser jointly exploits group-level and element-level sparsity in the complex domain.
Numerical experiments confirmed that the state evolution predictions closely track the empirical NMSE across the full SNR range, and that the proposed complex SGL denoiser uniformly outperforms baselines that exploit only partial structure (complex LASSO, complex GL, and real-valued SGL).

The framework applies broadly to complex-valued inverse problems in which a non-separable denoiser can exploit structural dependencies that separable or real-decoupled approaches cannot.
Beyond the SGL, natural application domains include OTFS preamble detection for unsourced random access~\cite{mirri2025approximate}, OFDM channel estimation with clustered sparsity, and array processing problems where phase coherence and spatial group structure are central.

The current theory assumes i.i.d. complex Gaussian sensing matrices, which limits direct applicability to problems with structured operators.
Several directions for future work address this and other extensions.
First, extending the framework to structured sensing operators such as partial DFT matrices, OTFS channel models, or random convolutional designs, through connections to orthogonal or vector AMP, would broaden its applicability.
Second, because state evolution provides a scalar recursion predicting the mean-squared error, it can serve as an oracle for optimizing regularization parameters and the denoiser application schedule, automating choices that are currently set heuristically.
Third, unrolling the complex AMP iterations into a learnable architecture (deep unfolding) is a natural extension; the state evolution characterization proved here provides the theoretical foundation for such designs in the complex non-separable setting.
Fourth, replacing the proximal denoiser with a Bayes-optimal posterior mean estimator, which is itself non-separable under group-structured priors, may strengthen recovery guarantees.
Finally, extending the framework to generalized linear observation models (e.g., quantized measurements or phase retrieval) is a further direction for broadening the scope of complex AMP.

\section*{Acknowledgment}
This material is based, in part, upon work conducted at Texas A\&M University and supported by the National Science Foundation (NSF) under Grant CNS-2148354.
This work was supported, in part, by the European Union--Next Generation EU under the Italian National Recovery and Resilience Plan (NRRP), partnership on ``Telecommunications of the Future'' (PE00000001--program ``RESTART'').

\appendices

\section{Detailed Proofs}

\label{app:detailed_proofs}

The appendix is organized as follows.
We first state the regularity assumptions of the real matrix AMP framework~\cite{gerbelot2023amp} (Assumption~\ref{assumption:reg-real}) and verify that they are implied by the complex regularity conditions of Assumption~\ref{assumption:reg-complex} under the canonical transformation.
We then define the real state evolution (Definition~\ref{def:state-evol-real}) and show in Lemma~\ref{lem:CollapsedState} that the complex state evolution of Definition~\ref{def:state-evol-complex} arises by collapsing the real $2\times 2$ arrays to complex scalars.
Finally, we prove Theorem~\ref{thm:complex-amp-main} by combining the real matrix AMP theorem with pseudo-Lipschitz closure, and derive Corollary~\ref{cor:complex-amp-vec-est} as a specialization to vector estimation.

\subsection{Regularity Transfer}
\label{sec:regularity-transfer}
\begin{assumption}[Regularity for Real AMP]
\label{assumption:reg-real}
Consider a sequence of recursions, each of the form \eqref{eq:real-mat-1} and \eqref{eq:real-mat-2}, indexed by $m$ and $M \equiv M(m)$ for which $m/M \rightarrow \Delta \in (0, \infty)$ as $m \rightarrow \infty$.
In this regime, assume the following holds.
\begin{list}{(A\arabic{enumi})}
  {\usecounter{enumi}
  \setlength{\leftmargin}{3em}
  \setlength{\labelwidth}{3.5em}
  \setlength{\labelsep}{0.5em}}
\item $\Wv_{i,j} \equiv \Wv_{i,j}(m) \stackrel{\mathrm{i.i.d.}}{\sim} \mathcal{N} \left( 0, 1/m \right)$.
\item The sequences of functions $\{f^k\}$ and $\{g^k\}$ are uniformly Lipschitz.
\item For $\Hv^0 \equiv \Hv^0(m)$, $\left\| \Hv^0 \right\|_{\mathrm{F}} / \sqrt{m}$ converges to a finite constant as $m \rightarrow \infty$.
\item The following limit exists and is finite:
\begin{equation*}
\lim_{m \rightarrow \infty} \frac{1}{m} f^0 \left( \Hv^0 \right)^{\transpose} f^0 \left( \Hv^0 \right) \in \mathbb{R}^{p \times p}
\end{equation*}
\item For any positive definite matrix $\Sv \in \mathcal{S}_p^+$, the following limit exists and is finite:
\begin{equation} \label{eq:A5}
\lim_{m \rightarrow \infty} \frac{1}{m} \mathbb{E} \left[ f^0 \left( \Hv^0 \right)^{\transpose} f^0 \left( \Zv \right) \right] \in \mathbb{R}^{p \times p}
\end{equation}
where $\Zv \in \mathbb{R}^{M \times p}$, $\Zv \sim \mathcal{N} (\zerov, \Sv \otimes \IDm_M )$.
\item For any $r,s > 0$ and any positive definite matrices $\Sv_f \in \mathcal{S}_{2p}^+$ and $\Sv_g \in \mathcal{S}_{2q}^+$, the following limits exist and are finite:
\begin{gather}
\lim_{m \rightarrow \infty} \frac{1}{m} \mathbb{E} \left[ f^r \left( \Zv_{\tau}^r \right)^{\transpose} f^s \left( \Zv_{\tau}^s \right) \right] \in \mathbb{R}^{p \times p} \label{eq:A6a} \\
\lim_{m \rightarrow \infty} \frac{1}{m} \mathbb{E} \left[ g^r \left( \Zv_{\sigma}^r \right)^{\transpose} g^s \left( \Zv_{\sigma}^s \right) \right] \in \mathbb{R}^{q \times q} \label{eq:A6b}
\end{gather}
where $\big( \Zv_{\tau}^r, \Zv_{\tau}^s \big) \in \left( \mathbb{R}^{M \times p} \right)^2$, $\big( \Zv_{\tau}^r, \Zv_{\tau}^s \big) \sim \mathcal{N} (0, \Sv_f \otimes \IDm_M )$ and, likewise, $\big( \Zv_{\sigma}^r, \Zv_{\sigma}^s \big) \in \left( \mathbb{R}^{m \times q} \right)^2$, $\big( \Zv_{\sigma}^r, \Zv_{\sigma}^s \big) \sim \mathcal{N} (\zerov, \Sv_g \otimes \IDm_m )$.
\end{list}
\end{assumption}

\begin{lemma}[Complex Regularity Implies Real Regularity in Lifted Space]
    \label{lem:complex-reg-real-lifted}
Let $f$ and $g$ be defined in terms of $\widetilde{f}$ and $\widetilde{g}$ via the canonical
transformation of Definition~\ref{def:canonical-transformation}. Then
Assumption~\ref{assumption:reg-complex} implies Assumption~\ref{assumption:reg-real}.
\end{lemma}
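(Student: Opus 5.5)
The plan is to check the six items of Assumption~\ref{assumption:reg-real} one at a time. The lifted system uses $m = 2n$ rows, $M = N$ columns, and block widths $p = q = 2$. Throughout, $\Delta = 2\delta$.

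\textbf{Conditions (A1)--(A3).}
For (A1), $\Wv$ is built from $\Re \Av_{\iota,:}$ and $\Im \Av_{\iota,:}$, as in \eqref{eq:CanonicalMap1}. Since $\Av_{ij} \sim \cC\cN(0,1/n)$, its real and imaginary parts are i.i.d.\ $\cN(0,1/(2n)) = \cN(0,1/m)$. This is exactly the required normalization, and independence across entries is inherited.

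For (A3), the identity $\norm{\Hv^0}_{\mathrm F} = \norm{\uv^0}_2$ gives $\norm{\Hv^0}_{\mathrm F}/\sqrt{m} = \norm{\uv^0}_2/\sqrt{2n}$. This converges by the complex (A3).

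For (A2), the map $\Hv \mapsto \Hv\rho = \uv$ is an isometry from $(\Rb^{N\times 2}, \norm{\cdot}_{\mathrm F})$ onto $(\Cb^N, \norm{\cdot}_2)$. The output map $\pzf(\uv) \mapsto [\Re\pzf, \Im\pzf]$ is also an isometry, so $f^k$ has the same Lipschitz constant as $\pzf^k$.

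For $g^k$ the argument is less immediate. First, $\abs{\vv_\iota}^2 = (e_{11}-e_{22})^2 + (e_{12}+e_{21})^2 \le 2\norm{\Ev_\iota}_{\mathrm F}^2$, so $\Ev \mapsto \vv$ is $\sqrt2$-Lipschitz. Second, $\norm{\Qv_\iota}_{\mathrm F}^2 = 2\abs{\pzg_\iota}^2$, so the output map is $\sqrt2$-Lipschitz. Hence $g^k$ is Lipschitz with constant at most $2L$, still uniform in $n$.

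\textbf{Conditions (A4)--(A6).}
These reduce to expressing real Gram matrices through complex inner products. Write $\pzf = a + \pzi b$ and $\pzf' = c + \pzi d$. The $2\times 2$ matrix $[a,b]^\transpose[c,d]$ has entries $a^\transpose c$, $a^\transpose d$, $b^\transpose c$, $b^\transpose d$. Each is a real linear combination of the real and imaginary parts of $\langle \pzf, \pzf' \rangle$ and $\langle \pzf, \overline{\pzf'} \rangle$. For example, $a^\transpose c = \tfrac12\Re(\langle\pzf,\pzf'\rangle + \langle\pzf,\overline{\pzf'}\rangle)$.

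Consequently, existence of both the Hermitian limit and the pseudo-covariance limit in the complex (A4)--(A6) gives existence of the real $2\times2$ limits. The extra factor $n/m = 1/2$ only rescales them.

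For $g$, the block structure gives $\Qv_\iota^\transpose \Qv'_\iota = \begin{bsmallmatrix} aa'+bb' & ab'-ba' \\ ba'-ab' & aa'+bb'\end{bsmallmatrix}$. This equals $\Re\langle\pzg_\iota,\pzg'_\iota\rangle$ on the diagonal and $\pm\Im$ of it off the diagonal. So for $g$ only the Hermitian limit is needed.

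\textbf{Matching the Gaussian laws (main obstacle).}
The hard step is that the real conditions (A5)--(A6) quantify over all real positive definite $\Sv_f$ and $\Sv_g$. The complex assumption instead parametrizes Gaussians by a pair $(\Sv, \Cv)$ of Hermitian covariance and complex-symmetric pseudo-covariance. I will show the two parametrizations coincide.

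For $f$, this is direct. Each row of $(\Zv_\tau^r, \Zv_\tau^s) \in \Rb^{N\times 4}$ is a general real Gaussian vector in $\Rb^4$. Its image under $\rho$ is a general improper complex Gaussian in $\Cb^2$, and any such vector has its law determined by some admissible $(\Sv,\Cv)$. I will write out the standard linear bijection between real $4\times4$ covariances and pairs $(\Sv,\Cv)$; positive definiteness of $\Sv_f$ makes $\Sv$ positive definite.

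For $g$, the argument of $g$ is $\Ev_\iota \in \Rb^{2\times 2}$, and $g$ depends on it only through $\vv_\iota = \rho^\transpose\Ev_\iota\rho$. This map is linear, so a real Gaussian $\Ev$ with covariance $\Sv_g \otimes \Iv_m$ is pushed forward to a complex Gaussian $(\Zv^r,\Zv^s)$ with some $(\Sv,\Cv)$. One subtlety is that this pushforward is i.i.d.\ across the $n$ blocks, not across the $2n$ rows. The row-pairs $(2\iota-1, 2\iota)$ are independent copies under the Kronecker structure, so I expect this to go through, but I will verify it carefully. A second subtlety is degeneracy: positive definiteness of the image covariance must be confirmed, or handled by continuity of the Lipschitz expectations in the covariance.

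Once both parametrizations are matched, the complex limits in (A6) supply all the real ones and the lemma follows.
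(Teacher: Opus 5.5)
Your proposal is correct and follows the same route as the paper's proof. You check (A1)--(A3) directly and reduce the $2\times2$ Gram matrices in (A4)--(A6) to real and imaginary parts of $\langle\cdot,\cdot\rangle$ and $\langle\cdot,\overline{\cdot}\rangle$. Along the way you get sharper Lipschitz constants ($L$ and $2L$ versus the paper's looser bounds) and observe that $g$ needs only the Hermitian limit. The Gaussian-law matching that the paper leaves implicit does go through: rows $2\iota-1,2\iota$ are independent, so the collapsed $g$-arguments are i.i.d.\ across blocks and proper ($\Cv=0$, as computed in Lemma~\ref{lem:CollapsedState}), and their covariance is positive definite because $\vv^r_\iota$ and $\vv^s_\iota$ depend on disjoint coordinates of a nondegenerate Gaussian.
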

\begin{IEEEproof}
Under the canonical transformation of Definition~\ref{def:canonical-transformation}, the real system has parameters $m = 2n$, $M = N$, $p = q = 2$, and $\Delta = m/M = 2\delta$.
Keeping this in mind, we rapidly verify each condition of Assumption~\ref{assumption:reg-real} below.

\emph{(A1)}
Each entry $\Av_{i,j} \sim \cC\cN(0, 1/n)$ decomposes as $\Re \Av_{i,j}, \Im \Av_{i,j} \stackrel{\mathrm{i.i.d.}}{\sim} \cN(0, 1/(2n))$.
Since $\Wv$ stacks $\Re \Av_{\iota,:}$ and $\Im \Av_{\iota,:}$ as separate rows, its entries are i.i.d.\ $\cN(0, 1/(2n)) = \cN(0, 1/m)$.

\emph{(A2)}
The $\iota$th row of $f^k$ is defined as $f^k_\iota(\Hv^k) = \Mv^k_\iota = [\Re \pzf^k_\iota(\Hv^k \rho), \Im \pzf^k_\iota(\Hv^k \rho)]$.
This is a composition of the linear map $\Hv^k \mapsto \Hv^k \rho$ (operator norm $\|\rho\| = \sqrt{2}$), the complex function $\pzf^k_\iota$ (uniformly Lipschitz by complex~(A2)), and the real--imaginary extraction $z \mapsto (\Re z, \Im z)$ (1-Lipschitz).
Since each factor has controlled Lipschitz growth, so does each $f^k_\iota$ (with a Lipschitz constant uniform in $\iota$), and therefore $f^k$.
The $\iota$th block of $g^k$ is defined as $g_\iota^k(\Ev^k) = \Qv^k_\iota =
\begin{bsmallmatrix}
    \Re \pzg_\iota^k(\rho^\transpose \Ev^k \rho) & \Im \pzg_\iota^k(\rho^\transpose \Ev^k \rho) \\
    \Im \pzg_\iota^k(\rho^\transpose \Ev^k \rho) & -\Re \pzg_\iota^k(\rho^\transpose \Ev^k \rho)
\end{bsmallmatrix}$.
This is a composition of the linear map $\Ev^k \mapsto \Rv \Ev^k \rho$ (operator norm $2$), the complex function $\pzg^k_\iota$ (uniformly Lipschitz by complex~(A2)), and the extraction $z \mapsto \begin{bsmallmatrix} \Re z & \Im z \\ \Im z & -\Re z \end{bsmallmatrix}$ (2-Lipschitz).
Since each factor has controlled Lipschitz growth, so does each $g^k_\iota$ (with a Lipschitz constant uniform in $\iota$), and therefore $g^k$.

\emph{(A3)}
By Definition~\ref{def:canonical-transformation}, $\Hv^0_\iota = [\Re \uv^0_\iota, \Im \uv^0_\iota]$, so $\|\Hv^0\|_{\mathrm{F}}^2 = \sum_\iota (|\Re \uv^0_\iota|^2 + |\Im \uv^0_\iota|^2) = \|\uv^0\|_2^2$.
Hence $\|\Hv^0\|_{\mathrm{F}} / \sqrt{m} = \|\uv^0\|_2 / \sqrt{2n}$, which converges by complex~(A3).

\emph{(A4)}
By Definition~\ref{def:canonical-transformation}, $\Mv^0_\iota = [\Re \pzf^0_\iota(\uv^0), \Im \pzf^0_\iota(\uv^0)]$, so
\begin{equation*}
\frac{1}{m} f^0(\Hv^0)^\transpose f^0(\Hv^0)
= \frac{1}{2n} \sum_\iota (\Mv^0_\iota)^\transpose \Mv^0_\iota \in \Rb^{2\times 2} .
\end{equation*}
The three independent entries of this symmetric matrix are linear combinations of $\langle \pzf^0(\uv^0), \pzf^0(\uv^0) \rangle / n$ (which is real), $\Re \langle \pzf^0(\uv^0), \overline{\pzf^0(\uv^0)} \rangle / n$, and $\Im \langle \pzf^0(\uv^0), \overline{\pzf^0(\uv^0)} \rangle / n$, all of which exist by complex~(A4).

\emph{(A5)--(A6)}
The real limits in \eqref{eq:A5}--\eqref{eq:A6b} involve Gram matrices of the form
\begin{xalignat*}{2}
\frac{1}{m} & \Eb[f^r(\zetav_\tau^r)^\transpose f^s(\zetav_\tau^s)] &
\frac{1}{m} & \Eb[g^r(\zetav_\sigma^r)^\transpose g^s(\zetav_\sigma^s)] ,
\end{xalignat*}
where the arguments are jointly real Gaussian.
By the same block-wise expansion as in~(A4), the entries of each $2 \times 2$ Gram matrix are determined by $\langle \pzf^r(\Zv_\tau^r), \pzf^s(\Zv_\tau^s) \rangle / n$, $\langle \pzf^r(\Zv_\tau^r), \overline{\pzf^s(\Zv_\tau^s)} \rangle / n$ (and likewise for $\pzg$), where $\Zv_\tau^r = \zetav_\tau^r \rho$ is the complex collapse.
The existence of these complex limits is guaranteed by complex~(A5)--(A6).
\end{IEEEproof}

\subsection{State Evolution Collapse}
\label{sec:state-evolution-collapse}
With Assumption~\ref{assumption:reg-real} in hand, we define the state evolution for the real-matrix AMP.

\begin{definition}[Real State Evolution] \label{def:state-evol-real}
Define two doubly infinite arrays, $\left( \Sigmav_{s, r} \right)_{s, r \geq 0}$ and $\left( \Tv_{s, r} \right)_{s, r \geq 1}$ through the following recursion, known as \emph{state evolution}.
Let the first state evolution iterate be
\begin{equation}
\label{eq:init-real}
\Sigmav^{0,0} =  \lim_{m \rightarrow \infty} \frac{1}{m} f^0 \left( \Hv^0 \right)^{\transpose} f^0 \left( \Hv^0 \right) \in \Rb^{p\times p}.
\end{equation}
Once $\left( \Sigmav^{s, r} \right)_{0 \leq s, r \leq t}$ and $\left( \Tv^{s, r} \right)_{1 \leq s, r \leq t}$ are specified, recursively define state evolution iterates for $t+1$ as follows.

Set $\left( \zetav_{\sigma}^0, \ldots, \zetav_{\sigma}^t \right) \sim \mathcal{N} \left( \zerov, \left( \Sigmav^{s, r} \right)_{0 \le s, r \leq t} \otimes \IDm_m \right)$, and define for $s\in\{0, \dots, t\}$:
\begin{equation}
\label{eq:tau-real}
(\Tv^{t+1, s+1})^\transpose = \Tv^{s+1, t+1}
= \lim_{m \rightarrow \infty} \frac{1}{m} \mathbb{E} \left[ g^s \left( \zetav_{\sigma}^s \right)^{\transpose} g^t \left( \zetav_{\sigma}^t \right) \right].
\end{equation}
This characterizes array $\left( \Tv^{s, r} \right)_{1 \leq s, r \leq t+1}$.
Then take $\zetav_{\tau}^0 = \Hv^0$, set $\left( \zetav_{\tau}^1, \ldots, \zetav_{\tau}^{t+1} \right) \sim \mathcal{N} \left( \zerov, \left( \Tv_{s, r} \right)_{1\le s, r \leq t+1} \otimes \IDm_M \right)$, and define for $s\in\{0, \dots, t+1\}$
\begin{equation}
\label{eq:sigma-real}
(\Sigmav^{t+1, s})^\transpose = \Sigmav^{s, t+1}
= \lim_{m \rightarrow \infty} \frac{1}{m} \mathbb{E} \left[ f^s \left( \zetav_{\tau}^s \right)^{\transpose} f^{t+1} \left( \zetav_{\tau}^{t+1} \right) \right].
\end{equation}
This characterizes array $\left( \Sigmav^{s, r} \right)_{0 \leq s, r \leq t+1}$.
\end{definition}

\begin{lemma}[Collapsed Real State Evolution Implies Complex State Evolution]
\label{lem:CollapsedState}
Let $f$ and $g$ be defined in terms of $\pzf$ and $\pzg$ via the canonical transformation of Definition~\ref{def:canonical-transformation}. Then the real state evolution equations~\eqref{eq:init-real}, \eqref{eq:tau-real}, and \eqref{eq:sigma-real} imply the complex state evolution equations~\eqref{eq:init-complex}, \eqref{eq:tau-complex}, and \eqref{eq:sigma-complex}.
\end{lemma}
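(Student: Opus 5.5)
The plan is to proceed by induction on $t$, showing at each stage that the real $2\times 2$ arrays collapse to the complex scalars. Concretely, the claim to carry through the induction is $\sigma^{s,r} = \tfrac{2}{?}\,\rho^\hermitian \Sigmav^{s,r}\rho$ up to an explicit normalization, and likewise for $\tau^{s,r}$ and $\Tv^{s,r}$. The normalization comes from $m = 2n$, $M = N$, $\Delta = 2\delta$. We also need that the collapsed Gaussians $\Zv^s_\sigma = \Rv\,\zetav^s_\sigma\rho$ (blockwise $\rho^\transpose \zetav_{\sigma,\iota}\rho$) and $\Zv^s_\tau = \zetav^s_\tau\rho$ are jointly complex Gaussian with covariance $(\sigma^{s,r})\otimes\Iv_n$, respectively $(\tau^{s,r})\otimes\Iv_N$. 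Since by Definition~\ref{def:canonical-transformation} the functions $g^s$ and $f^s$ depend on their arguments only through these collapses, every expectation in \eqref{eq:tau-real} and \eqref{eq:sigma-real} is a function of the law of the collapsed variables. That law is determined by their covariance and pseudo-covariance, which is exactly what Assumption~\ref{assumption:reg-complex}(A5)--(A6) lets us track.

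\textbf{Initialization.} Write $\Mv^0_\iota = [a_\iota, b_\iota]$ with $a_\iota + \pzi b_\iota = \pzf^0_\iota(\uv^0)$. Then
\[
\rho^\hermitian (\Mv^0_\iota)^\transpose \Mv^0_\iota \rho = \overline{(\Mv^0_\iota\rho)}\,(\Mv^0_\iota\rho) = |\pzf^0_\iota(\uv^0)|^2 .
\]
Summing over $\iota$ and dividing by $m = 2n$ gives $\rho^\hermitian\Sigmav^{0,0}\rho = \tfrac{1}{2n}\langle \pzf^0(\uv^0),\pzf^0(\uv^0)\rangle$. Multiplying by the appropriate constant and using $n = \delta N$ recovers \eqref{eq:init-complex}. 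The general algebraic identity here is that for real $2\times 2$ Gram blocks $X^\transpose Y$, the form $\rho^\hermitian X^\transpose Y\rho$ equals the conjugate-linear inner product of the collapses $X\rho$ and $Y\rho$. By contrast, $\rho^\transpose X^\transpose Y\rho$ gives the pseudo-covariance $\langle Y\rho, \overline{X\rho}\rangle$.

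\textbf{Step $\tau$.} Apply the same identity blockwise to $g$, now using Lemma~\ref{lem:prop}\ref{lem:qri}: $\Qv_\iota\rho = \pzg_\iota\overline{\rho}$. This gives
\[
\rho^\hermitian (\Qv^s_\iota)^\transpose \Qv^t_\iota\rho = \overline{\pzg^s_\iota}\,\pzg^t_\iota\,\rho^\transpose\overline{\rho} = 2\,\overline{\pzg^s_\iota}\,\pzg^t_\iota .
\]
So $\rho^\hermitian\Tv^{s+1,t+1}\rho$ is, up to the constant, $\tfrac1n\Eb\langle \pzg^t(\Zv^t_\sigma),\pzg^s(\Zv^s_\sigma)\rangle$, matching \eqref{eq:tau-complex} after conjugation. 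The symmetry $(\Tv^{t+1,s+1})^\transpose = \Tv^{s+1,t+1}$ becomes the Hermitian relation $\overline{\tau^{t+1,s+1}} = \tau^{s+1,t+1}$. The step for $\sigma$ is identical, using $f$ and $\Mv\rho = \pzf$.

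\textbf{Main obstacle.} The hard part is the distributional step. We must show that if $(\zetav^s_\sigma)$ has covariance $\Sigmav\otimes\Iv_m$, then the collapse $\rho^\transpose \zetav_{\sigma,\iota}\rho$ has covariance given by $\rho^\hermitian\Sigmav\rho$ (suitably scaled). We must also determine its pseudo-covariance, which equals $\rho^\transpose\Sigmav\rho$. The issue is that the complex SE in Definition~\ref{def:state-evol-complex} tracks only the covariance. To handle this, I would show inductively that the pseudo-covariance $\rho^\transpose\Sigmav^{s,r}\rho$ and the analogous quantity for $\Tv$ vanish. For $\Tv$ this is immediate: $\rho^\transpose(\Qv^s_\iota)^\transpose\Qv^t_\iota\rho = \pzg^s\pzg^t\,\overline{\rho}^\transpose\overline{\rho} = 0$, since $\overline{\rho}^\transpose\overline{\rho} = 1 - 1 = 0$. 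This is exactly why the $\begin{bsmallmatrix} a & b\\ b & -a\end{bsmallmatrix}$ structure was chosen. For $\Sigmav$ the analogous term is $\langle\pzf^t,\overline{\pzf^s}\rangle$, which need not vanish. There I would instead use that the $\Ev$-collapse sums two independent rows, $\rho^\transpose\zetav_\iota\rho = \zetav_{\iota,1,:}\rho + \pzi\,\zetav_{\iota,2,:}\rho$. Its pseudo-covariance is then $\rho^\transpose\Sigmav\rho\,(1 + \pzi^2) = 0$. Hence $\Zv_\sigma$ is circular with covariance $2\rho^\hermitian\Sigmav\rho$, which accounts for the normalization. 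Finally, I would check that the $\Zv_\tau$ pseudo-covariance is zero by the $\Tv$ computation, so both collapsed Gaussians are circular and their laws are fully fixed by $\sigma^{s,r}$ and $\tau^{s,r}$. This closes the induction.
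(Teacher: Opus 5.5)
Your proposal is correct and follows essentially the paper's route. The $\begin{bsmallmatrix} a & b \\ b & -a \end{bsmallmatrix}$ structure of $\Qv_\iota$ makes the $\Tv$-side collapse proper, and independence plus identical distribution of the two rows of each block $\zetav_{\sigma,\iota}$ make the $\Sigmav$-side collapse proper; your $\rho^\hermitian(\cdot)\rho$ and $\rho^\transpose(\cdot)\rho$ sandwiches are a compact form of the paper's entrywise $2\times 2$ expansions. When you fill in the placeholder constant, note that the normalizations differ, $\tau^{s,r} = \rho^\transpose \Tv^{s,r}\overline{\rho}$ but $\sigma^{s,r} = 2\rho^\transpose \Sigmav^{s,r}\overline{\rho}$ (the conjugates of your $\rho^\hermitian(\cdot)\rho$ forms), and these are exactly the covariances $\Eb[Z^s\overline{Z^r}]$ of the collapsed Gaussians.
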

\begin{IEEEproof}
Under the canonical transformation, $\Wv \in \Rb^{2n \times N}$, so the real framework of Definition~\ref{def:state-evol-real} operates with $m = 2n$, $M = N$, and $p = q = 2$.
The strategy is to show that each real $2 \times 2$ state evolution array ($\Sigmav^{s,r}$ and $\Tv^{s,r}$) collapses to the corresponding complex scalar ($\sigma^{s,r}$ and $\tau^{s,r}$) under the canonical transformation, and that the associated Gaussian distributions are proper complex normal (i.e., the relation matrix vanishes, $C_{s,t} = 0$).
We first recall the relationship between real and complex Gaussian covariances, then apply it to the $\Tv$ and $\Sigmav$ arrays in turn.

Suppose that $\xiv_{s, t} \triangleq [\Re Z_s, \Im Z_s, \Re Z_t, \Im Z_t]^\transpose \in \Rb^4$ is a zero-mean Gaussian random vector.
The corresponding complex representation $\Zv_{s, t} = [Z_s, Z_t]^\transpose \in \Cb^2$ has covariance and relation entries
\begin{align*}
    \Gamma_{s, t} = \Eb[Z_s \overline{Z_t}]
    &= (\Eb[\Re Z_s \Re Z_t] + \Eb[\Im Z_s \Im Z_t])  \\
    & \hspace{20pt} + \pzi (\Eb[\Im Z_s \Re Z_t ] - \Eb[\Re Z_s \Im Z_t]), \\
    C_{s, t} = \Eb[Z_s Z_t]
    &= (\Eb[\Re Z_s \Re Z_t] - \Eb[\Im Z_s \Im Z_t]) \\
    & \hspace{20pt} + \pzi (\Eb[\Im Z_s \Re Z_t ] + \Eb[\Re Z_s \Im Z_t]).
\end{align*}
Thus, both $\Gamma_{s,t}$ and $C_{s,t}$ are fully determined by the four real second moments $\Eb[\Re Z_s \Re Z_t]$, $\Eb[\Im Z_s \Im Z_t]$, $\Eb[\Im Z_s \Re Z_t]$, and $\Eb[\Re Z_s \Im Z_t]$.

We begin with the initialization~\eqref{eq:init-real}.
By the same computation as in~(A4) of Lemma~\ref{lem:complex-reg-real-lifted}, we have that $\Sigmav^{0,0} = (1/(2n)) \sum_\iota (\Mv^0_\iota)^\transpose \Mv^0_\iota \in \Rb^{2\times 2}$, whose entries collapse to $\sigma^{0,0} = (1/\delta) \lim_{N\to\infty} (1/N) \langle \pzf^0(\uv^0), \pzf^0(\uv^0) \rangle$, matching~\eqref{eq:init-complex}.

We now turn to the recursive arrays $\Tv$ and $\Sigmav$.
From~\eqref{eq:tau-real} with $m = 2n$, we have
\begin{align*}
    \Tv^{s+1, t+1} &= \lim_{n \to \infty} \frac{1}{2n} \Eb[g^s(\zetav_\sigma^s)^\transpose g^t(\zetav_\sigma^t)].
\end{align*}
By Definition~\ref{def:canonical-transformation}, $g^s(\zetav_\sigma^s)$ decomposes into $n$ blocks $\Qv^s_\iota = \begin{bsmallmatrix} \Re \pzg_\iota^s & \Im \pzg_\iota^s \\ \Im \pzg_\iota^s & -\Re \pzg_\iota^s \end{bsmallmatrix}$, where $\pzg^s_\iota \equiv \pzg^s_\iota(\Rv \zetav_\sigma^s \rho)$, and similarly for $g^t(\zetav_\sigma^t)$.
Expanding the product block-wise, we obtain
\begin{align*}
    &\Tv^{s+1, t+1}
    = \lim_{n \to \infty} \frac{1}{2n} \Eb \sum_{\iota=1}^n (\Qv_\iota^s)^\transpose \Qv_\iota^t \\
    &= \lim_{n \to \infty} \frac{1}{2n} \Eb \sum_{\iota=1}^n
    \begin{bmatrix}
         \Re \pzg_\iota^s \Re \pzg_\iota^t +  \Im \pzg_\iota^s \Im \pzg_\iota^t & \Re \pzg_\iota^s \Im \pzg_\iota^t -  \Re \pzg_\iota^t \Im \pzg_\iota^s  \\
         \Re \pzg_\iota^t \Im \pzg_\iota^s -  \Re \pzg_\iota^s \Im \pzg_\iota^t &  \Re \pzg_\iota^s \Re \pzg_\iota^t +  \Im \pzg_\iota^s \Im \pzg_\iota^t
    \end{bmatrix}.
\end{align*}
Since
\begin{equation}
    \begin{split}
    \langle \pzg^s_\iota, \pzg_\iota^t \rangle &\triangleq  \pzg_\iota^s \overline{\pzg^t_\iota}
    = (\Re \pzg^s_\iota + \pzi \Im \pzg^s_\iota) (\Re \pzg^t_\iota - \pzi \Im \pzg^t_\iota) \\
    &= (\Re \pzg_\iota^s \Re \pzg_\iota^t +  \Im \pzg_\iota^s \Im \pzg_\iota^t) + \pzi (\Re \pzg_\iota^t \Im \pzg_\iota^s -  \Re \pzg_\iota^s \Im \pzg_\iota^t), \label{eq:expand-complex-inner-prod}
    \end{split}
\end{equation}
the matrix entries are the real and imaginary parts of $\langle \pzg^s_\iota, \pzg^t_\iota \rangle$.
Recalling $\tau^{s+1, t+1}$ from~\eqref{eq:tau-complex}, we obtain
\begin{equation*}
\Tv^{s+1, t+1} =
    \begin{bmatrix}
        \frac{1}{2} \Re \tau^{s+1, t+1} & -\frac{1}{2} \Im \tau^{s+1, t+1} \\
        \frac{1}{2} \Im \tau^{s+1, t+1} & \frac{1}{2} \Re \tau^{s+1, t+1}
    \end{bmatrix}.
\end{equation*}
Since $\zetav_\tau^s \in \Rb^{N \times 2}$ has rows in $\Rb^{1\times 2}$, the collapse $\Zv_{\tau, \iota}^s = \zetav_{\tau, \iota}^s \rho$ gives $\Re \Zv_{\tau,\iota}^s = \zeta_{\tau,\iota,1}^s$ and $\Im \Zv_{\tau,\iota}^s = \zeta_{\tau,\iota,2}^s$, so the entries of $\Tv^{s+1,t+1}$ are directly the four real second moments.
Substituting into the covariance and relation formulas, we obtain
\begin{align*}
    \Gamma_{s, t} &= \Tv^{s+1, t+1}_{1,1} + \Tv^{s+1, t+1}_{2,2} \\
    & \hspace{20pt} + \pzi \left(\Tv^{s+1, t+1}_{2,1} -  \Tv^{s+1, t+1}_{1,2} \right) = \tau^{s+1, t+1}, \\
    C_{s,t} &= \Tv^{s+1, t+1}_{1,1} - \Tv^{s+1, t+1}_{2,2}   \\
    & \hspace{20pt} + \pzi \left( \Tv^{s+1, t+1}_{2,1} + \Tv^{s+1, t+1}_{1,2} \right) = 0.
\end{align*}
Collecting these results, we obtain $(\Zv_\tau^1, \dots, \Zv_{\tau}^{t+1}) \sim \cC\cN(\zerov, (\tau^{r, s})_{1\le r,s \le t+1} \otimes \Iv_N)$.

Next, we must establish the relation between $\zetav_\sigma^s, \zetav_\sigma^t$ and $Z_\sigma^s, Z_\sigma^t$.
In what follows we drop the subscript $\sigma$ for ease of notation.
Each block $\zetav^s_\iota \in \Rb^{2\times 2}$ has two rows $\zetav^s_{\iota,1,:}, \zetav^s_{\iota,2,:} \in \Rb^{1\times 2}$ that are i.i.d.\ across rows (by the $\otimes \IDm_m$ Kronecker structure), with joint covariance across iterations given by $(\Sigmav^{s,r})$.
The complex collapse $Z^s_\iota = \rho^\transpose \zetav^s_\iota \rho$ yields
\begin{align*}
    \Re Z^s = \zeta_{1,1}^s - \zeta_{2,2}^s, \quad \Im Z^s = \zeta_{1,2}^s + \zeta_{2,1}^s.
\end{align*}

Using the independence of the two rows, their identical distribution, zero mean, and the symmetry $(\Sigmav^{s,t})^\transpose = \Sigmav^{t,s}$ from~\eqref{eq:sigma-real}, we have
\begin{align*}
    \Eb[\Re Z^s \Re Z^t] &= \Eb[\zeta_{1,1}^s \zeta_{1,1}^t]+\Eb[\zeta_{2,2}^s\zeta_{2,2}^t] \\
    & \hspace{20pt} - \Eb[\zeta_{1,1}^s \zeta_{2,2}^t] -\Eb[\zeta_{1,1}^t \zeta_{2,2}^s] \\
    &= \Eb[\zeta_{1,1}^s \zeta_{1,1}^t] + \Eb[\zeta_{1,2}^s\zeta_{1,2}^t] \\
    &\hspace{20pt} - \underbrace{\Eb[\zeta_{1,1}^s] \Eb[\zeta_{2,2}^t]}_{=0} - \underbrace{\Eb[\zeta_{1,1}^t] \Eb[\zeta_{2,2}^s]}_{=0} \\
    &= \Eb[\zeta_{1,1}^s \zeta_{1,1}^t]+\Eb[\zeta_{1,2}^s\zeta_{1,2}^t] = \Sigmav^{s,t}_{1,1} + \Sigmav^{s,t}_{2,2}, \\
    \Eb[\Im Z^s \Im Z^t] &= \Sigmav^{s,t}_{1,1} + \Sigmav^{s,t}_{2,2} = \Eb[\Re Z^s \Re Z^t], \\
    \Eb[\Im Z^s \Re Z^t] &= \Eb[\zeta_{1,2}^s \zeta_{1,1}^t]-\Eb[\zeta_{2,1}^s \zeta_{2,2}^t] \\
    &= \Eb[\zeta_{1,2}^s \zeta_{1,1}^t]-\Eb[\zeta_{1,1}^s \zeta_{1,2}^t] = \Sigmav^{t,s}_{1,2} - \Sigmav^{s,t}_{1,2}, \\
    \Eb[\Re Z^s \Im Z^t] &= \Eb[\zeta_{1,1}^s \zeta_{1,2}^t]-\Eb[\zeta_{2,1}^t \zeta_{2,2}^s] \\
    &= \Eb[\zeta_{1,1}^s \zeta_{1,2}^t]-\Eb[\zeta_{1,1}^t \zeta_{1,2}^s] \\
    &= \Sigmav^{s,t}_{1,2} - \Sigmav^{t,s}_{1,2} = -\Eb[\Im Z^s \Re Z^t].
\end{align*}
Substituting into the covariance and relation formulas yields
\begin{align*}
    \Gamma_{s,t} &= 2 \left(\left(\Sigmav^{s,t}_{1,1} + \Sigmav^{s,t}_{2,2}\right) + \pzi \left(\Sigmav^{t,s}_{1,2} - \Sigmav^{s,t}_{1,2} \right)\right), \\
    C_{s,t} &= \left(\Eb[\Re Z^s \Re Z^t] - \Eb[\Im Z^s \Im Z^t]\right) \\
    & \hspace{20pt} + \pzi \left(\Eb[\Im Z^s \Re Z^t] + \Eb[\Re Z^s \Im Z^t]\right) = 0,
\end{align*}
where the first and the second parentheses in $C_{s,t}$ vanish because $\Eb[\Re Z^s \Re Z^t] = \Eb[\Im Z^s \Im Z^t]$ and $\Eb[\Re Z^s \Im Z^t] = -\Eb[\Im Z^s \Re Z^t]$, respectively.
Next, we express $\Gamma_{s,t}$ in terms of $\sigma^{s,t}$.
By~\eqref{eq:sigma-real} with $m = 2n$ and Definition~\ref{def:canonical-transformation}, $f^s(\zetav_\tau^s)$ has $N$ rows $\Mv^s_\iota = [\Re \pzf^s_\iota, \Im \pzf^s_\iota]$ where $\pzf^s_\iota \equiv \pzf^s_\iota(\zetav_\tau^s \rho)$, giving
\begin{align*}
\Sigmav^{s,t}
&= \lim_{n\to \infty} \frac{1}{2n} \Eb \sum_{\iota=1}^N (\Mv^s_\iota)^\transpose \Mv^t_\iota \\
&= \lim_{n\to \infty} \frac{1}{2n} \Eb \sum_{\iota=1}^N \begin{bmatrix}
        \Re \pzf^s_\iota \Re \pzf^t_\iota & \Re \pzf^s_\iota \Im \pzf^t_\iota \\
        \Im \pzf^s_\iota \Re \pzf^t_\iota & \Im \pzf^s_\iota \Im \pzf^t_\iota
    \end{bmatrix}.
\end{align*}
Recalling $\sigma^{s, t}$ from~\eqref{eq:sigma-complex} and applying~\eqref{eq:expand-complex-inner-prod},
\begin{align*}
    2 \left(\Sigmav^{s,t}_{1,1} + \Sigmav^{s,t}_{2,2}\right) = \Re \sigma^{s,t}, \quad 2\left(\Sigmav^{t,s}_{1,2} - \Sigmav^{s,t}_{1,2} \right) = \Im \sigma^{s,t},
\end{align*}
whence we have $\Gamma_{s,t} = \sigma^{s,t}$.
It follows that $(\Zv_{\sigma}^0, \dots, \Zv_{\sigma}^t) \sim \cC\cN(\zerov, (\sigma^{s,r})_{0\le s, r \le t} \otimes \Iv_n)$.
\end{IEEEproof}

\subsection{Proof of Theorem~\ref{thm:complex-amp-main}}
\label{sec:pseudo-lipschitz-functions}
\begin{IEEEproof}
Given a complex-valued pseudo-Lipschitz function $\Phi_n: (\Cb^N\times \Cb^n)^t \times \Cb^N \to \Cb$, define
\begin{align*}
    &\Psi_n(\Hv^0, \Ev^0, \dots, \Hv^{t-1}, \Ev^{t-1}, \Hv^t) \\
    &=
    \begin{bmatrix}
        \Re \Phi_n(\Hv^0 \rho, \Rv \Ev^0 \rho, \dots, \Hv^{t-1} \rho, \Rv \Ev^{t-1} \rho, \Hv^t \rho) \\
        \Im \Phi_n(\Hv^0 \rho, \Rv \Ev^0 \rho, \dots, \Hv^{t-1} \rho, \Rv \Ev^{t-1} \rho, \Hv^t \rho)
    \end{bmatrix}^\transpose{\in}\Rb^{1\times 2}.
\end{align*}
Since $\Psi_n$ is a composition of the linear maps $\Hv^k \mapsto \Hv^k \rho$ and $\Ev^k \mapsto \Rv \Ev^k \rho$ (both bounded, with operator norms $\|\rho\| = \sqrt{2}$ and $\|\Rv\|\|\rho\| = 2$), the pseudo-Lipschitz function $\Phi_n$, and the $1$-Lipschitz extraction $z \mapsto [\Re z, \Im z]^\transpose$, the function $\Psi_n$ is pseudo-Lipschitz of the same order.
Recalling~\cite[Thm. 2]{gerbelot2023amp}, we have that for any sequence of pseudo-Lipschitz functions $\Psi_n:(\Rb^{N \times 2}\times \Rb^{2n\times 2}) \times \Rb^{N\times 2} \to \Rb^{1\times 2}$ and $\Hv^0$, we have
\begin{align*}
    &\Psi_n(\Hv^0, \Ev^0, \dots, \Hv^{t-1}, \Ev^{t-1}, \Hv^t) \\
    & \hspace{20pt} \stackrel{\mathrm{P}}{\simeq}  \Psi_n(\Hv^0, \zetav_\sigma^0, \dots, \zetav_\tau^{t-1}, \zetav_\sigma^{t-1}, \zetav_\tau^t)
\end{align*}
Since $\Psi_n = [\Re \Phi_n, \Im \Phi_n]^\transpose$, we recover $\Phi_n = \Psi_n \rho$.
Writing $\uv^k = \Hv^k \rho$ and $\vv^k = \Rv \Ev^k \rho$,
\begin{align*}
\Phi_n(\uv^0, \vv^0, &\dots, \uv^{t-1}, \vv^{t-1}, \uv^t) \\
&= \Psi_n(\Hv^0, \Ev^0, \dots, \Hv^{t-1}, \Ev^{t-1}, \Hv^t) \rho \\
&\stackrel{\mathrm{P}}{\simeq} \Psi_n(\Hv^0, \zetav_\sigma^0, \dots, \zetav_\tau^{t-1}, \zetav_\sigma^{t-1}, \zetav_\tau^t) \rho \\
&= \Phi_n(\Hv^0 \rho, \Rv \zetav_\sigma^0 \rho, \dots, \zetav_\tau^{t-1} \rho, \Rv \zetav_\sigma^{t-1} \rho, \zetav_\tau^t \rho).
\end{align*}
By Lemma~\ref{lem:CollapsedState}, $\Rv \zetav_\sigma^s \rho$ and $\zetav_\tau^s \rho$ are proper complex Gaussian random variables with covariances $(\sigma^{s,r})$ and $(\tau^{s,r})$, i.e., they are distributed as $\Zv_\sigma^s$ and $\Zv_\tau^s$ respectively.
\end{IEEEproof}

\subsection{Proof of Corollary~\ref{cor:complex-amp-vec-est}}
\label{sec:amp-algo-corollary}
\begin{IEEEproof}
First, recall the AMP algorithm
\begin{align*}
    \zv^t &= \yv-\Av \xv^t + \frac{1}{\delta} \zv^{t-1} \langle (\eta^{t-1})'(\rv^{t-1}) \rangle, \\
    \rv^t &= \xv^t + \Av^{\hermitian} \zv^t, \quad
    \xv^{t+1} = \eta^t(\rv^t).
\end{align*}
    Consider the following with initialization $\uv^0 = - \xv_0$:
    \begin{align*}
        \uv^{t+1} &= \xv_0 - \rv^t, \quad
        \vv^t = \wv - \zv^t, \\
        \pzf^t(\uv) &= \eta^{t-1}(\xv_0 - \uv) - \xv_0, \quad
        \pzg^t(\vv) = \vv - \wv.
    \end{align*}
    Since $\eta^t$ is pseudo-Lipschitz by assumption, $\pzf^t(\uv) = \eta^{t-1}(\xv_0 - \uv) - \xv_0$ is pseudo-Lipschitz (composition with a translation), and $\pzg^t(\vv) = \vv - \wv$ is affine, hence Lipschitz.
    These functions satisfy Assumption~\ref{assumption:reg-complex}, so Theorem~\ref{thm:complex-amp-main} applies.
    Clearly $\pzf^t(\uv^{t}) = \eta^{t-1}(\rv^{t-1})-\xv_0 = \xv^{t}-\xv_0$, and so
    \begin{align*}
        &\lim_{N\to \infty} \frac{1}{N} \norm{\xv^t-\xv_0}_2^2 = \lim_{N\to \infty} \frac{1}{N} \langle \pzf^t(\uv^t), \pzf^t(\uv^t) \rangle \\
        & \hspace{10pt} \stackrel{{\rm Thm.~\ref{thm:complex-amp-main}}}{=} \lim_{N\to \infty} \frac{1}{N} \Eb[\langle \pzf^t(\Zv_\tau^t), \pzf^t(\Zv_\tau^t) \rangle] \\
        & \hspace{20pt} = \lim_{N\to \infty} \frac{1}{N} \Eb[\norm{\eta^{t-1}(\xv_0 - \Zv_\tau^t)-\xv_0}_2^2] \\
        & \hspace{20pt} = \lim_{N\to \infty} \frac{1}{N} \Eb[\norm{\eta^{t-1}(\xv_0 + \sqrt{\tau^{t,t} }\Zv)-\xv_0}_2^2] = \delta \sigma^{t,t}.
    \end{align*}
    But, on the other hand, we have that
    \begin{align*}
        \tau^{t+1,t+1} &= \lim_{n\to \infty} \frac{1}{n} \Eb[\langle \pzg^t(\Zv_\sigma^t),  \pzg^t(\Zv_\sigma^t)\rangle] \\
        &= \lim_{n\to \infty} \frac{1}{n} \Eb[ \norm{\pzg^t(\Zv_\sigma^t)}_2^2] = \lim_{n\to \infty} \frac{1}{n} \Eb[ \norm{\Zv_\sigma^t - \wv}_2^2] \\
        &= \lim_{n\to \infty} \frac{1}{n} \left(\Eb[ \norm{\Zv_\sigma^t}_2^2] - 2\Re\,\Eb[\langle \Zv_\sigma^t, \wv \rangle] + \Eb[\norm{\wv}_2^2]\right) \\
        &= \sigma^{t,t} + \sigma_w^2,
    \end{align*}
where the cross-term vanishes because $\Zv_\sigma^t$ is zero-mean and independent of $\wv$.
Thus, $\lim_{N\to \infty} \frac{1}{N} \norm{\xv^t-\xv_0}_2^2 = \delta(\tau^{t+1, t+1} - \sigma_w^2)$.
\end{IEEEproof}

\section{Matrix Extension: Derivations}
\label{sec:complex_matrix_amp}

This appendix provides the detailed derivations supporting the matrix extension of Section~\ref{sec:matrix-extension}.
We first describe the real-valued expansion of the observation model, then construct the real matrix-valued denoisers, and finally prove Proposition~\ref{prop:mat-red-dynamics}.

\subsection{Problem Setup and Notation}
\label{sec:mat-amp-setup}

Consider the observation model $\Yv = \Av \Xv + \Nv$, where $\Xv \in \Cb^{N\times p}$ is a complex matrix of $p$ features, each of dimension $N$, $\Av \in \Cb^{n\times N}$ is a complex Gaussian random matrix with i.i.d.\ entries $\Av_{\iota,j} \sim \cC\cN(0, 1/n)$, and $\Nv \in \Cb^{n\times p}$ has entries $\Nv_{\iota,j} \sim \cC\cN(0, \sigma_w^2)$.
We denote $\Av_{\iota,:} \in \Cb^{1\times N}$ the $\iota$th row and $\Av_{:,j} \in \Cb^{n\times 1}$ the $j$th column of $\Av$.

Define the real-valued expansions of $\Wv$ as in \eqref{eq:CanonicalMap1} and $\Sv$ as:
\begin{align*}
    \Sv &=
    \begin{bmatrix}
        \Re(\Xv_{:,1}), \Im(\Xv_{:,1}), \cdots, \Re(\Xv_{:,p}), \Im(\Xv_{:,p})
    \end{bmatrix} \in \Rb^{N\times 2p}.
\end{align*}
The product $\Pv = \Wv \Sv \in \Rb^{2n\times 2p}$ admits the block decomposition
\begin{align*}
    \Pv &= \begin{bmatrix}
        \Pv^{1,1} & \cdots & \Pv^{1,p} \\
        \vdots & \ddots & \vdots \\
        \Pv^{n,1} & \cdots & \Pv^{n,p}
    \end{bmatrix},
\end{align*}
where each block is given by
\begin{align*}
    \Pv^{\iota,j} &= \begin{bmatrix}
        \Re(\Av_{\iota,:})\Re(\Xv_{:,j}) &  \Re(\Av_{\iota,:})\Im(\Xv_{:,j}) \\
        \Im(\Av_{\iota,:})\Re(\Xv_{:,j}) & \Im(\Av_{\iota,:})\Im(\Xv_{:,j})
    \end{bmatrix} \in \Rb^{2\times 2}.
\end{align*}
Recalling $\rho = [1,~\pzi]^\transpose$ from Section~\ref{sec:canonical-transform}, we verify that
\begin{align*}
    &\rho^\transpose \Pv^{\iota,j} \rho = \begin{bmatrix}
        1 & \pzi
    \end{bmatrix}
    \begin{bmatrix}
        \Re(\Av_{\iota,:}) (\Re(\Xv_{:,j}) + \pzi \Im(\Xv_{:,j})) \\
        \Im(\Av_{\iota,:})(\Re(\Xv_{:,j}) + \pzi \Im(\Xv_{:,j}))
    \end{bmatrix} \nonumber \\
    &= (\Re(\Av_{\iota,:}) + \pzi \Im(\Av_{\iota,:}))(\Re(\Xv_{:,j}) + \pzi \Im(\Xv_{:,j})) = \Av_{\iota,:} \Xv_{:,j}.
\end{align*}
This confirms that the real expansion preserves the algebraic structure of the complex product under collapse.
One readily verifies that $\Rv_R \Wv = \Av$, $\Sv \Rv_L = \Xv$, and therefore $\Rv_R \Pv \Rv_L = \Av \Xv$.

\subsection{Real-Valued Denoiser Construction}
\label{sec:mat-amp-denoisers}

For brevity, we suppress the iteration superscript~$k$ in the remainder of this appendix.
Let $\Ev \in \Rb^{2n \times 2p}$ and $\Hv \in \Rb^{N \times 2p}$ be real matrices satisfying $\Rv_R \Ev \Rv_L = \Vv$ and $\Hv \Rv_L = \Uv$, respectively.
Define the real matrix-valued function $f$ from $\pzf$:
\begin{align*}
    \Mv {=} f(\Hv) {=}
    \begin{bmatrix}
        \Re(\pzf_{:,1}), \Im(\pzf_{:,1}), \dots, \Re(\pzf_{:,p}),\Im(\pzf_{:,p})
    \end{bmatrix} \in \Rb^{N\times 2p},
\end{align*}
so that $\Fv = \pzf(\Uv) = \Mv \Rv_L$.
Similarly, for $1\le \iota \le n$ and $1\le j\le p$, define the real matrix-valued function $g$ from $\pzg$ by
\begin{align*}
    \Qv &= g(\Ev) =
    \begin{bmatrix}
        g^{\iota,j}(\Ev)
    \end{bmatrix} \in \Rb^{2n\times 2p}, \\
    \Qv^{\iota,j} &= g^{\iota,j}(\Ev) = \begin{bmatrix}
        \Re\pzg_{\iota,j}(\Rv_R \Ev \Rv_L) & \Im \pzg_{\iota,j}(\Rv_R \Ev \Rv_L) \\
        \Im \pzg_{\iota,j}(\Rv_R \Ev \Rv_L) & -\Re \pzg_{\iota,j}(\Rv_R \Ev \Rv_L)
    \end{bmatrix}.
\end{align*}
The $2\times 2$ block structure of $\Qv^{\iota,j}$ encodes complex multiplication by $\overline{\rho}$, paralleling the vector case of Definition~\ref{def:canonical-transformation}.

\subsection{Proof of Proposition~\ref{prop:mat-red-dynamics}}
\label{sec:mat-amp-reduction}

We show that the real matrix AMP equations~\eqref{eq:real-mat-1}--\eqref{eq:real-mat-2}, when collapsed via $\Rv_R$ and $\Rv_L$, recover the complex matrix AMP equations~\eqref{eq:complex-mat-amp-U}--\eqref{eq:complex-mat-amp-V}.
The argument proceeds in three steps, verifying in turn the forward-channel terms and both Onsager correction terms.

\emph{Step~1: We show that $\Wv^\transpose \Qv \Rv_L = \Av^\hermitian \Gv$.}
From the definitions, $\Rv_R \Wv \Mv \Rv_L = \Av \Fv$.
For the adjoint term, we compute
\begin{equation} \label{eq:mat-QRL}
    \begin{split}
    (\Qv \Rv_L)^{\iota,j} &= (g(\Ev) \Rv_L)^{\iota,j} =
    \begin{bmatrix}
        \Re(\pzg_{\iota,j}(\Vv)) + \pzi \Im(\pzg_{\iota,j}(\Vv)) \\
        \Im(\pzg_{\iota,j}(\Vv)) - \pzi \Re(\pzg_{\iota,j}(\Vv))
    \end{bmatrix} \\
    &= \begin{bmatrix}
       \pzg_{\iota,j}(\Vv) \\
        -\pzi\pzg_{\iota,j}(\Vv)
    \end{bmatrix} = \pzg_{\iota,j}(\Vv) \overline{\rho}.
    \end{split}
\end{equation}
Stacking over all $\iota \in [n]$ and $j \in [p]$, we obtain $\Qv \Rv_L = \pzg(\Vv) \otimes \overline{\rho} = (\Iv_n \otimes \overline{\rho}) \pzg(\Vv) = \Rv_R^\hermitian \Gv$.
Since $\Wv$ is real, $\Wv^\transpose \Rv_R^\hermitian = \Wv^\hermitian \Rv_R^\hermitian = (\Rv_R \Wv)^\hermitian = \Av^\hermitian$, and therefore
\begin{align*}
    \Wv^\transpose \Qv \Rv_L = \Wv^\transpose \Rv_R^\hermitian \Gv = \Av^\hermitian \Gv.
\end{align*}

\emph{Step~2: We show that $\Rv_R \Qv \Bv^\transpose \Rv_L = \Gv \Lv^\transpose$, establishing $\Vv = \Av \Fv - \Gv \Lv^\transpose$.}
We first compute
\begin{equation} \label{eq:mat-RQBL}
    \begin{split}
    &(\Rv_R\Qv)^{\iota,j} = \rho^\transpose \Qv^{\iota,j} \\
    &= \begin{bmatrix}
        \Re(\pzg_{\iota,j}(\Vv))+\pzi \Im(\pzg_{\iota,j}(\Vv)), \Im(\pzg_{\iota,j}(\Vv))-\pzi \Re(\pzg_{\iota,j}(\Vv))
    \end{bmatrix} \\
    &=\pzg_{\iota,j}(\Vv) \rho^\hermitian,
    \end{split}
\end{equation}
which gives $\Rv_R\Qv = \pzg(\Vv) \otimes \rho^\hermitian = \Gv \Rv_L^\hermitian$.
It remains to show that $\Rv_L^\hermitian \Bv^\transpose \Rv_L = \Lv^\transpose$, i.e., that $\rho^\hermitian (\Bv^\transpose)^{\iota,j} \rho = \Lv^\transpose_{\iota,j}$ for each $\iota, j \in [p]$.

Recall that in the real system, $\Bv^\transpose_{\iu,\ju} = \frac{1}{m} \sum_{d=1}^{M} \frac{\partial f_{d,\iu} }{\partial \Hv_{d,\ju}}$ with $M = N$ and $m = 2n$.
Each pair $\iu, \ju$ belongs to the first or second position within the $(\iota,j)$th sub-block, where $\iota,j \in [p]$.
Expanding the sub-block, we obtain
\begin{align*}
    (\Bv^\transpose)^{\iota,j} &= \frac{1}{2n} \sum_{d=1}^{N}
    \begin{bmatrix}
        \frac{\partial \Re(\pzf_{d,\iota})}{\partial \Re(\Uv_{d,j})} & \frac{\partial \Im(\pzf_{d,\iota})}{\partial \Re(\Uv_{d,j})} \\
        \frac{\partial \Re(\pzf_{d,\iota})}{\partial \Im(\Uv_{d,j})} & \frac{\partial \Im(\pzf_{d,\iota})}{\partial \Im(\Uv_{d,j})}
    \end{bmatrix} \\
    &= \frac{1}{2n} \sum_{d=1}^{N}
    \begin{bmatrix}
        \frac{\partial }{\partial \Re(\Uv_{d,j})} \\
         \frac{\partial}{\partial \Im(\Uv_{d,j})}
    \end{bmatrix}
    \begin{bmatrix}
        \Re(\pzf_{d,\iota}) & \Im(\pzf_{d,\iota})
    \end{bmatrix}.
\end{align*}
By Lemma~\ref{lem:prop}\ref{lem:hu}, $\frac{1}{2}\rho^\hermitian \begin{bsmallmatrix} \partial / \partial \Re(\Uv_{d,j}) \\ \partial / \partial \Im(\Uv_{d,j}) \end{bsmallmatrix} = \frac{\partial}{\partial \Uv_{d,j}}$, so
\begin{align*}
    \rho^\hermitian (\Bv^\transpose)^{\iota,j} \rho = \frac{1}{n} \sum_{d=1}^N \frac{\partial \pzf_{d,\iota}}{\partial \Uv_{d,j}} = \Lv^\transpose_{\iota,j}.
\end{align*}
Therefore, $\Rv_R \Qv \Bv^\transpose \Rv_L = \Gv \Rv_L^\hermitian \Bv^\transpose \Rv_L = \Gv \Lv^\transpose$, which establishes
\begin{align*}
    \Vv = \Rv_R \Ev \Rv_L = \Rv_R (\Wv \Mv - \Qv \Bv^\transpose) \Rv_L = \Av \Fv - \Gv \Lv^\transpose.
\end{align*}

\emph{Step~3: We show that $\Cv^\transpose \Rv_L = \Rv_L \Kv^\transpose$, establishing $\Uv = \Av^\hermitian \Gv - \Fv \Kv^\transpose$.}
In the real system, $\Cv^\transpose_{\iu, \ju} = \frac{1}{m} \sum_{\du=1}^{m} \frac{\partial g_{\du,\iu}}{ \partial \Ev_{\du, \ju}}$.
Grouping into sub-blocks indexed by $\iota,j \in [p]$,
\begin{align*}
    \Cv^\transpose_{\iu, \ju} = \frac{1}{2n} \sum_{d=1}^n \sum_{u=1,2} \frac{\partial }{\partial (\Ev^{d,j}_{u,:})^\transpose} ( g_{u,:}^{d,\iota} ),
\end{align*}
where $\iu, \ju$ belong to the first or second position within the $(\iota,j)$th sub-block.
Expanding the full sub-block and substituting the block structure of $g$,
\begin{align}
     (\Cv^\transpose)^{\iota, j} &= \frac{1}{2n}
     \sum_{d=1}^n
     \begin{bmatrix}
         \frac{\partial}{\partial \Ev^{d,j}_{1,1}} \\
         \frac{\partial}{\partial \Ev^{d,j}_{1,2}}
     \end{bmatrix}
     \begin{bmatrix}
         g^{d,\iota}_{1,1} &
         g^{d,\iota}_{1,2}
     \end{bmatrix} \nonumber \\
     & \hspace{60pt} + \begin{bmatrix}
         \frac{\partial}{\partial \Ev^{d,j}_{2,1}} \\
         \frac{\partial}{\partial \Ev^{d,j}_{2,2}}
     \end{bmatrix}
     \begin{bmatrix}
         g^{d,\iota}_{2,1} &
         g^{d,\iota}_{2,2}
     \end{bmatrix} \nonumber \\
     &= \frac{1}{2n}
     \sum_{d=1}^n
     \begin{bmatrix}
         \frac{\partial}{\partial \Ev^{d,j}_{1,1}} & \frac{\partial}{\partial \Ev^{d,j}_{2,1}} \\
         \frac{\partial}{\partial \Ev^{d,j}_{1,2}} & \frac{\partial}{\partial \Ev^{d,j}_{2,2}}
     \end{bmatrix}
     \begin{bmatrix}
          \Re \pzg_{d,\iota} & \Im \pzg_{d,\iota} \\
         \Im \pzg_{d,\iota} & -\Re \pzg_{d,\iota}
     \end{bmatrix}. \label{eq:mat-CRL}
\end{align}
Multiplying on the right by $\Rv_L$ and noting that $\begin{bsmallmatrix} \Re \pzg_{d,\iota} & \Im \pzg_{d,\iota} \\ \Im \pzg_{d,\iota} & -\Re \pzg_{d,\iota} \end{bsmallmatrix} \rho = \overline{\rho} \, \pzg_{d,\iota}(\Vv)$, we obtain
\begin{equation*}
    (\Cv^\transpose \Rv_L)^{\iota,j} = \frac{1}{2n}
     \sum_{d=1}^n
     \begin{bmatrix}
         \frac{\partial}{\partial \Ev^{d,j}_{1,1}} & \frac{\partial}{\partial \Ev^{d,j}_{2,1}} \\
         \frac{\partial}{\partial \Ev^{d,j}_{1,2}} & \frac{\partial}{\partial \Ev^{d,j}_{2,2}}
     \end{bmatrix} \overline{\rho} \, \pzg_{d,\iota}(\Vv).
\end{equation*}
Since $\Re(\Vv_{d,j}) = \Ev^{d,j}_{1,1}-\Ev^{d,j}_{2,2}$ and $\Im(\Vv_{d,j}) = \Ev^{d,j}_{1,2} + \Ev^{d,j}_{2,1}$, by the chain rule the derivative matrix transforms as
\begin{align*}
     \begin{bmatrix}
         \frac{\partial}{\partial \Ev^{d,j}_{1,1}} & \frac{\partial}{\partial \Ev^{d,j}_{2,1}} \\
         \frac{\partial}{\partial \Ev^{d,j}_{1,2}} & \frac{\partial}{\partial \Ev^{d,j}_{2,2}}
     \end{bmatrix}
     \overline{\rho}
     &=
     \begin{bmatrix}
         \frac{\partial}{\partial \Re(\Vv_{d,j})} & \frac{\partial}{\partial \Im(\Vv_{d,j})} \\
         \frac{\partial}{\partial \Im(\Vv_{d,j})} &
         -\frac{\partial}{\partial \Re(\Vv_{d,j})}
     \end{bmatrix}
     \begin{bmatrix}
         1 \\
        -\pzi
     \end{bmatrix} \\
     &= \rho \left( \frac{\partial}{\partial \Re(\Vv_{d,j})}-\pzi\frac{\partial}{\partial \Im(\Vv_{d,j})} \right),
\end{align*}
where the last step uses the factorization into a scalar times~$\rho$.
By Lemma~\ref{lem:prop}\ref{lem:ev}, the Wirtinger derivative gives
\begin{align*}
    (\Cv^\transpose \Rv_L)^{\iota,j} &{=} \rho \cdot \frac{1}{n}
     \sum_{d=1}^n
    \frac{1}{2} \left( \frac{\partial}{\partial \Re(\Vv_{d,j})}-\pzi\frac{\partial}{\partial \Im(\Vv_{d,j})} \right) \pzg_{d,\iota}(\Vv) \\
    &= \rho \cdot \frac{1}{n} \sum_{d=1}^n \frac{\partial \pzg_{d,\iota}(\Vv)}{\partial \Vv_{d,j}} = \rho \Kv^\transpose_{\iota,j}.
\end{align*}
Therefore, $\Cv^\transpose \Rv_L = \Kv^\transpose \otimes \rho = (\Iv_p \otimes \rho) \Kv^\transpose = \Rv_L \Kv^\transpose$, and we conclude
\begin{align*}
    \Uv &= \Hv \Rv_L = (\Wv^\transpose \Qv - \Mv \Cv^\transpose) \Rv_L \\
    &= \Av^\hermitian \Gv - \Mv \Rv_L \Kv^\transpose = \Av^\hermitian \Gv - \Fv \Kv^\transpose. \qedhere
\end{align*}

\bibliographystyle{IEEEbib}
\bibliography{refs}

\end{document}